\patchcmd{\maketitle}{\@fpheader}{\\}{}{}
\theoremstyle{remark}
\newtheorem{claim}{\protect\claimname}
\newtheorem{cor}{\protect\corollaryname}
\newtheorem{lem}{\protect\lemmaname}
\newtheorem{defn}{Definition}[section]
\providecommand{\claimname}{Claim}
\providecommand{\corollaryname}{Corollary}
\providecommand{\lemmaname}{Lemma}
\title{Radial Cutoffs and Holographic Entanglement}
\author[]{Brianna Grado-White,}
\author[]{Donald Marolf,}
\author[]{and Sean J. Weinberg}
\affiliation[]{Department of Physics, University of California, Santa Barbara, CA 93106, USA}
\emailAdd{brianna@physics.ucsb.edu}
\emailAdd{marolf@physics.ucsb.edu}
\emailAdd{sjasonw@physics.ucsb.edu}
\abstract{Tensor networks, $T\bar{T}$, and broader notions of a holographic principle all motivate the idea that some notion of gravitational holography should persist in the presence of a radial cutoff. But in the absence of time-reflection symmetry, the areas of Hubeny-Rangamani-Takayanagi surfaces anchored to the radial cutoff generally violate strong subadditivity, even when the associated boundary regions are spacelike separated as defined by both bulk and boundary notions of causality.  We thus propose an alternate definition of cutoff-holographic entropy using a restricted maximin prescription anchored to a codimension 2 cutoff surface.  For bulk solutions that respect the null energy condition, we show that the resulting areas satisfy SSA, entanglement wedge nesting, and monogamy of mutual information in parallel with cutoff free results in AdS. These results hold even when the cutoff surface fails to be convex.}
\begin{document}

\global\long\def\Mm{\mathrm{Mm}_\gamma}%
\global\long\def\A{\text{Area}}%

\maketitle
\section{Introduction}

There is a great deal of interest in generalizing the AdS/CFT correspondence so as to rely less on the presence of an asymptotically AdS boundary. An ultimate goal would be understand a notion of gravitational duality relevant to cosmology, and in particular to our own apparently-inflating spacetime.

A possible first step toward this goal is start with a standard asymptotically-AdS holographic set up, and then to remove the AdS boundary by introducing a finite radial cutoff.  This was idea behind the work of  \cite{McGough:2016lol} and its generalizations (e.g. \cite{Taylor:2018xcy,Hartman:2018tkw}; see also \cite{Guica:2019nzm}), which posited that the introduction of such cutoffs was related to irrelevant deformations of the dual CFT.  Such radial cutoffs are naturally taken to define codimension 1 boundaries at finite distance from the bulk, though we will emphasize the study of codimension 2 boundaries in sections \ref{sec:prelim}-\ref{sec:disc}.

More generally,  the idea that some notion of holography should persist in the presence of a radial cutoff is strongly motivated by tensor network models; see e.g.\cite{Swingle:2009bg, Qi:2013caa, Evenbly2011,MolinaVilaplana:2011xt, Swingle:2012wq, Matsueda:2012xm,Pastawski:2015qua, Hayden:2016cfa}. In any tensor network, an arbitrary cut through the interior (perhaps defined by a cutoff surface) will define a state living on that cut; see figure \ref{fig:tn}. Furthermore, in many cases where the original tensor network defines an isometric embedding of a bulk Hilbert space into a boundary dual, the same will be true of the cutoff network.  Such ideas are closely related to the surface/state correspondence suggested in \cite{Miyaji:2015yva}, the entanglement of purification conjecture \cite{Takayanagi:2017knl, Nguyen:2017yqw}, and the construction of tensor networks on sub-AdS scales described in \cite{Bao:2018pvs,Bao:2019fpq}.  See also \cite{Krishnan:2019ygy,Krishnan:2020oun}.

\begin{figure}[t]
        \begin{center}
        \begin{subfigure}{.49\linewidth}
        \centering
                \includegraphics[width=0.75\textwidth]{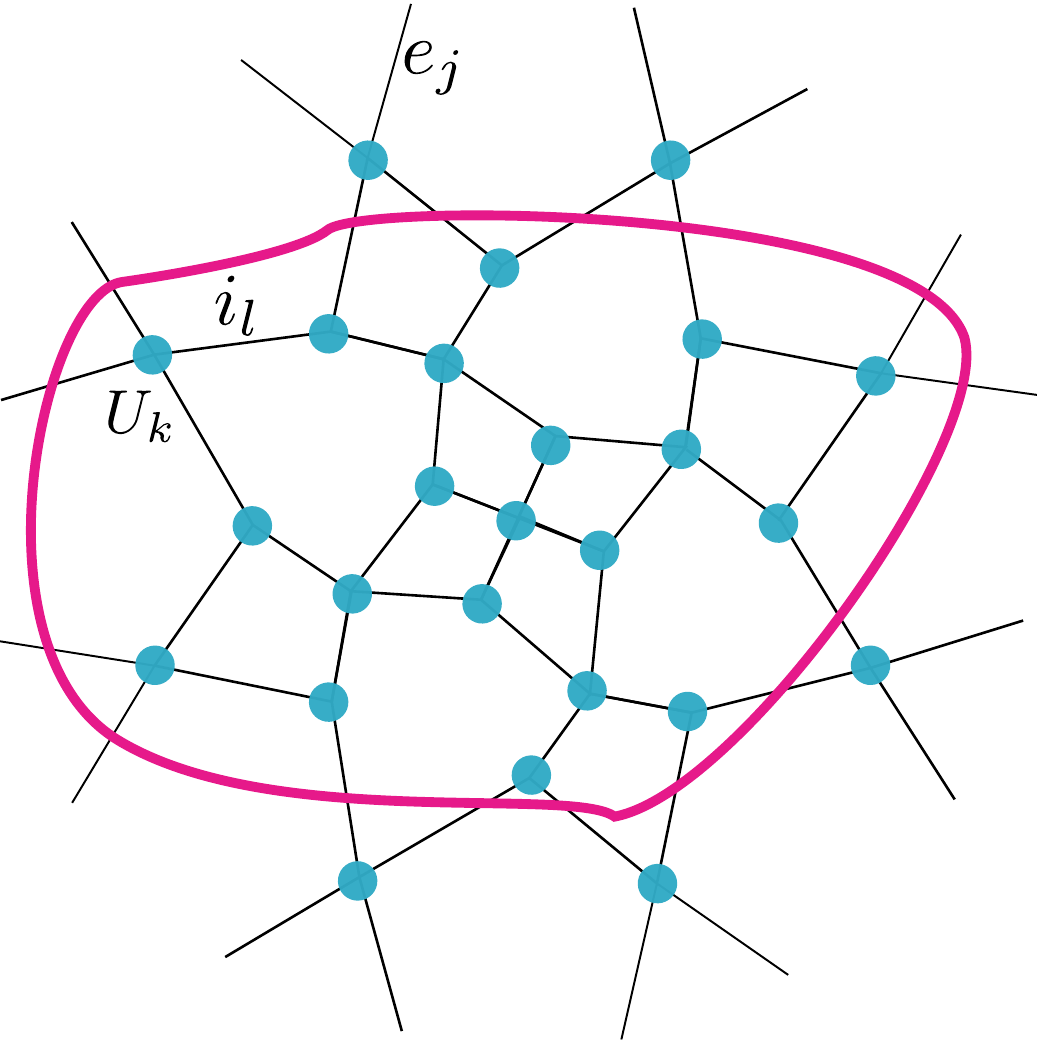}\quad
                \end{subfigure}
		\begin{subfigure}{0.49\linewidth}
        \centering
                \includegraphics[width=0.75\textwidth]{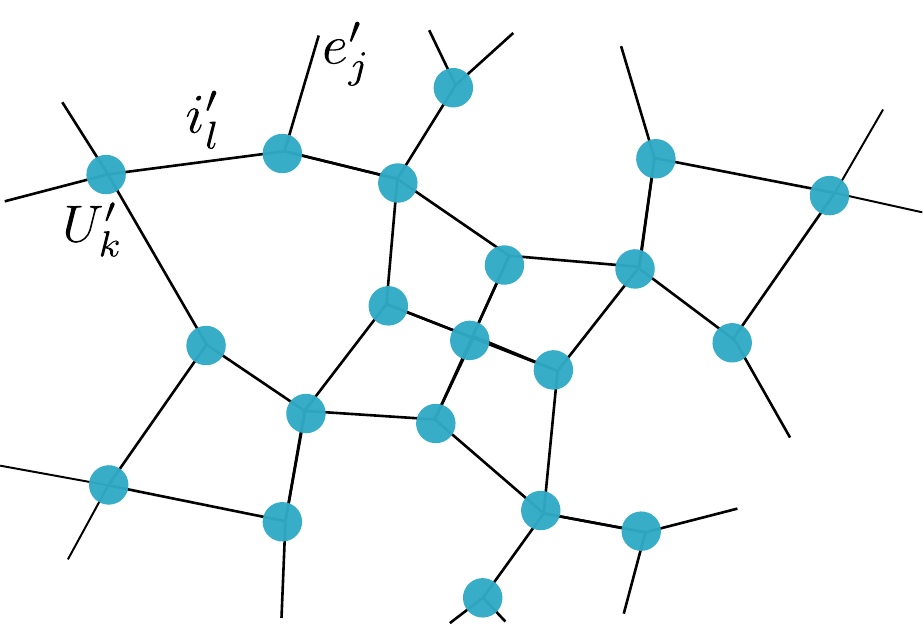}
                \end{subfigure}
        \end{center}
        \caption{A tensor network is a graphical depiction of a quantum state in a multipartite Hilbert space.  It also represents a circuit that prepares the state, consisting of tensors $U_k$ contracted over internal indices $i_l$. Left: A tensor network representing a state $|\psi\rangle \in \mathcal{H}_{e_1}\otimes... \otimes\mathcal{H}_{e_n}$
        and a magenta curve along which this network will be cut off. We use $e_j$ to denote the links at the boundary of the network, though only one is labelled explicitly in the figure. Right: Cutting off the tensor network at left defines a new tensor network which constructs a different state  $|\psi'\rangle \in \mathcal{H}_{e'_1}\otimes...\otimes \mathcal{H}_{e'_{m}}$ in a different Hilbert space associated with the new boundary links $e'_j$ (again with only one labelled explicitly in the figure).}
        \label{fig:tn}
\end{figure}

However, there remains the question of how to access the supposed quantum state on the cutoff surface using bulk techniques, and in particular how to compute quantities related to quantum information.  In the time symmetric context, it would appear that the Ryu-Takayanagi (RT) formula \cite{Ryu:2006ef,Ryu:2006bv} continues to give consistent results for entanglement entropy in the presence of such cutoffs. In particular, in addition to the explicit checks performed in \cite{Donnelly:2018bef}, it is easy to see that the original proof of  strong subadditivity (SSA)  \cite{Headrick:2007km} for contexts without cutoffs in fact continues to hold when one is present.  Indeed, the argument of \cite{Headrick:2007km} relies only on the fact that RT surfaces are minimal subsurfaces on a common time slice, without regard to the nature of the geometry on that slice.

In dynamical contexts, however Hubeny-Rangamani-Takayanagi (HRT) surfaces anchored to a finite boundary may fail to be minimal subsurfaces of a common achronal surface, even when their boundary anchor sets are spacelike separated with respect to the bulk.  In this context,  SSA may simply fail to hold.

We discuss several examples of such SSA violations in sections \ref{sec:counter-flat} and \ref{sec:countercutoff} below.  Other cases violating HRT-SSA with a radial cutoff  were recently studied in \cite{Lewkowycz:2019xse}, where it was found that entanglement wedge nesting failed as well. See also \cite{Geng:2019ruz} for discussions of SSA failures in de Sitter space. The examples of \cite{Lewkowycz:2019xse} were associated with a discrepancy between bulk and boundary causality, as regions of the cutoff surface were spacelike separated with respect to the induced metric on the cutoff surface but were nevertheless causally related through the bulk. One might thus think that the SSA issue could be resolved by restricting to boundary regions that are also spacelike separated through the bulk.
However, the examples of \ref{sec:counter-flat} below will show explicitly that SSA violations can occur even in contexts that respect bulk causality in this way.

Instead, the key feature of all violations turns out to be the failure of the relevant extremal surfaces to be contained in a single domain of dependence within the cutoff bulk. In the terminology of \cite{Miyaji:2015yva}, this is called a failure of convexity. This means that entanglement wedge nesting also fails again. But with regard to SSA, the domain of dependence issue in particular implies that there cannot be a bulk achronal slice containing both the relevant slice of the cutoff boundary and all of the extremal surfaces. As a result, and consistent with the explicit violation in our examples, one cannot readily use the same proof strategy of  \cite{Headrick:2007km} as in the RT case.  In particular, the HRT-SSA proof given for the cutoff free case in \cite{Wall:2012uf} does not apply.

Improving the situation appears to require a new proposal for holographic entropy in the presence of a radial cutoff.
The above comparison with \cite{Wall:2012uf} immediately suggests that we consider instead a maximin based construction,  which in simple cutoff free cases provides an alternate definition of holographic entanglement that turns out to be equivalent to HRT. The maximin approach can also be used to establish SSA in the the case of a convex cutoff surface \cite{Sanches:2016sxy}; see also \cite{Nomura:2018kji,Murdia:2020iac}. Now, as shown by the example in section \ref{sec:counter-flat}, the original maximin construction of \cite{Wall:2012uf} does not suffice.
But realizing that we do not currently understand what notion of causality might govern the propagation of information and excitations in a (likely nonlocal) dual description on the cutoff surface, we will take one further step and consider instead the {\it restricted} maximin procedure of \cite{Marolf:2019bgj} associated with a codimension 2 cutoff surface $\gamma$ rather than a codimension 1 radial cutoff. Restricted maximin surfaces are confined by construction to the domain of dependence of an achronal surface that ends on $\gamma$, so any extremal surface outside this domain must differ from the associated restricted maximin surface. We will show below that the areas of our restricted maximin surfaces {\it do} satisfy SSA, suggesting that these surfaces give a better definition of holographic entanglement in settings with a radial cutoff.  Indeed, we will see that in time-symmetric settings this prescription reproduces the successful RT prescription much better than does naive application of HRT.   Entanglement wedge nesting and monogamy of mutual information \cite{Hayden:2011ag} will follow as well.

\begin{figure}[t]
        \begin{center}
        \begin{subfigure}{.49\linewidth}
        \centering
                \includegraphics[width=0.91\textwidth]{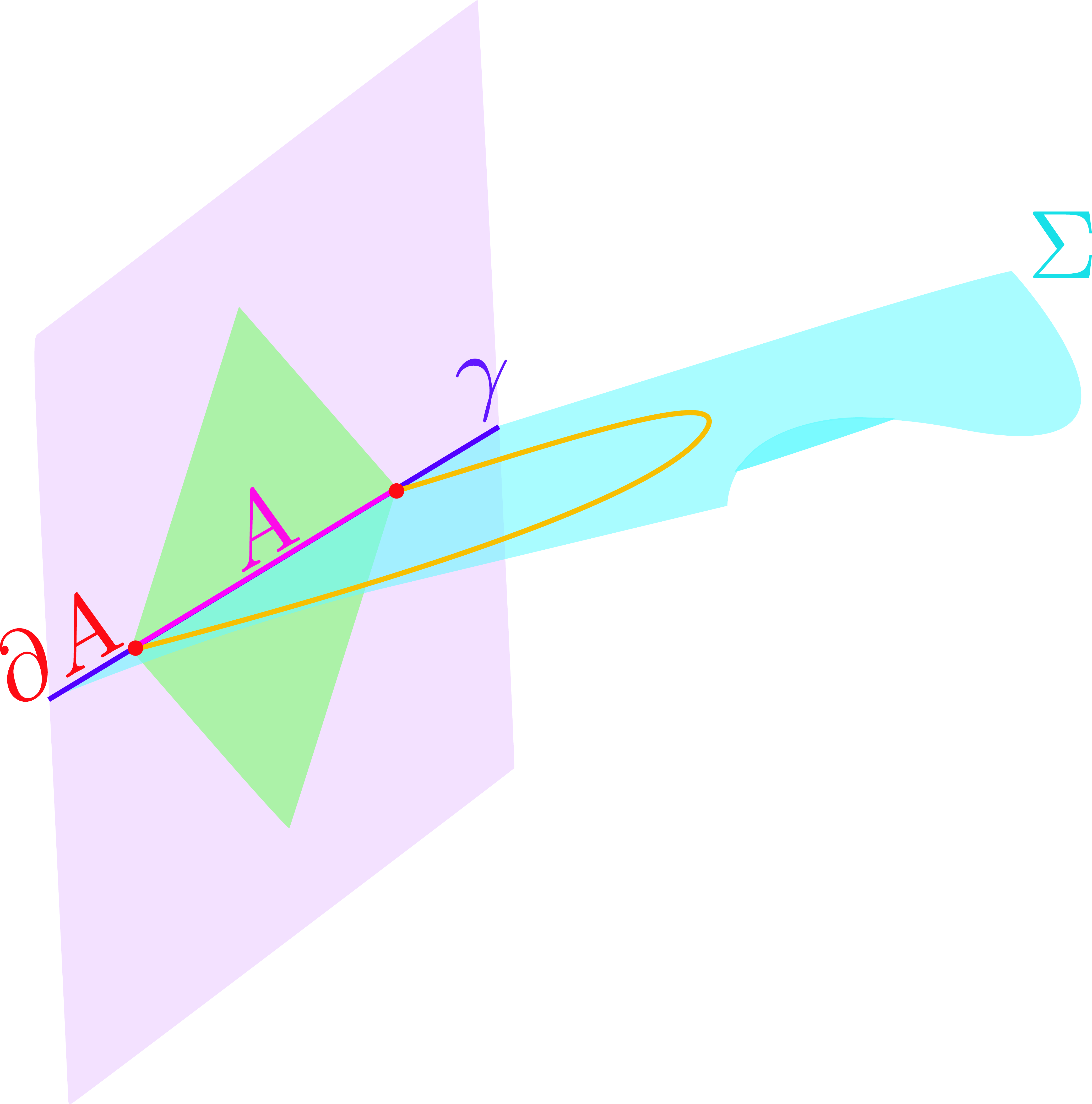}\quad
                \end{subfigure}
		\begin{subfigure}{0.49\linewidth}
        \centering
                \includegraphics[width=0.9\textwidth]{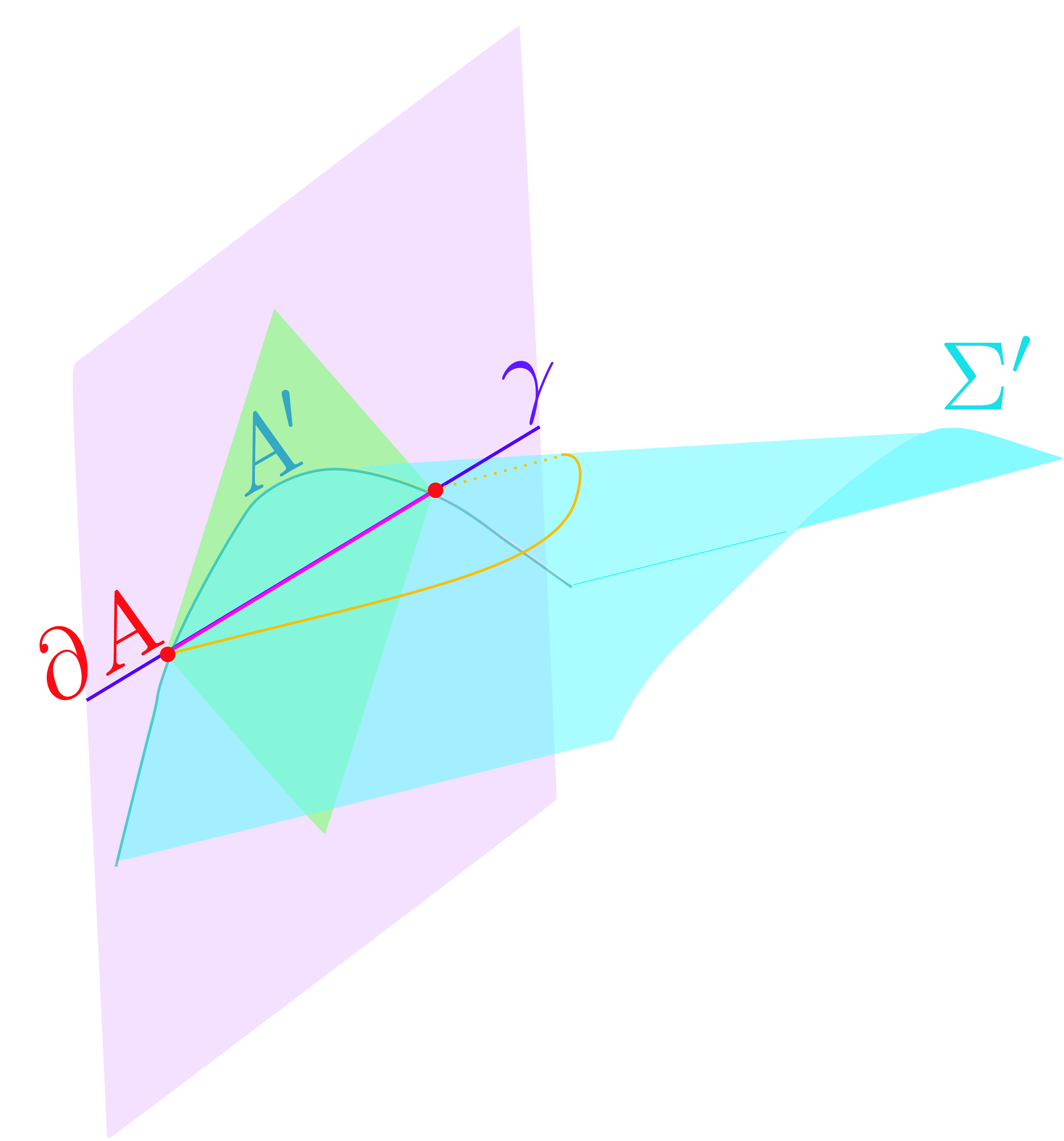}
                \end{subfigure}
        \end{center}
        \caption{Left: A restricted maximin surface is defined by a maximin construction on Cauchy slices which are constrained to contain a codimension $2$ surface $\gamma$. Right: For an (unrestricted) maximin surface, $\Sigma'$ need not contain $\gamma$ or $A$, but may instead meet $\gamma$ only on the subset $\partial A$.}
        \label{fig:restr_maxi}
\end{figure}

In fact, we will work below with a very general notion of ``cutoff.''  We consider any smooth codimension 2 surface in any spacetime satisfying basic positive energy properties (to be detailed below).  In particular, we make no assumption that our spacetime with boundary be constructed from an asymptotically AdS spacetime, or that its dynamics involve a negative cosmological constant.

Before proceeding, we pause to give a very brief summary of the restricted maximin approach from \cite{Marolf:2019bgj}.
The construction begins by choosing a slice $\gamma$ of the cutoff boundary, so that $\gamma$ is a codimension 2 surface with respect to the bulk.  It then constructs a bulk entangling surface for $A\subset \gamma$ using a two step procedure that first minimizes the area of surfaces homologous to $A$ on each bulk achronal surface with boundary $\gamma$.  We call these achronal surfaces `Cauchy surfaces' as we are interested only in the corresponding domain of dependence.   The second step then maximizes the above minima over all such Cauchy surfaces.  We use $\Mm(A)$ to denote the resulting restricted maximin surface.  Here the term `restricted' refers to the fact that in the second step, the boundary of the Cauchy surfaces is held fixed, whereas the original (unrestricted) maximin procedure considered all bulk Cauchy surfaces that include the much smaller set $\partial A$.  In the contexts originally studied in \cite{Wall:2012uf}, unrestricted and restricted maximin surfaces turn out to coincide \cite{Marolf:2019bgj}.


Interestingly, the recent works \cite{Penington:2019npb,Almheiri:2019psf,Almheiri:2019hni,Almheiri:2019yqk,Rozali:2019day,Chen:2019uhq,Bousso:2019ykv,Almheiri:2019psy,Almheiri:2019qdq,Penington:2019kki} on recovering the Page curve from quantum extremal surfaces suggest a rather different motivation for fixing a codimension 2 surface $\gamma$.  In those works it was noted that codimension 2 boundaries are natural when the bulk is not an isolated system.  Time evolution then generally changes the entropies being studied, so one must fix a slice of the codimension 1 boundary (and thus effectively a codimension 2 boundary) to obtain a well-defined answer.  Fixing a codimension 2 boundary is thus natural if the bulk inside the supposed cutoff surface continues to interact with the exterior; i.e., it is natural if one regards the codimension 1 ``cutoff'' as merely a slice through a larger bulk system as opposed to a true cutoff on the dynamics.

The remainder of the paper is organized as follows. We begin in section \ref{sec:prelim} by reviewing known violations of strong subadditivity for HRT areas in the presence of a radial cutoffs.  This section also states various definitions and reviews certain useful results from prior works.  The proofs of restricted maximin strong subadditivity, entanglement wedge nesting, and monogamy of mutual information for an arbitrary codimension 2 boundary $\gamma$ are given in section \ref{sec:results}. Section \ref{sec:disc} concludes with a brief discussion.
An appendix contains some additional results (not required for the main argument) showing that up to sets of measure zero our restricted maximin surfaces either coincide with $\gamma$ or are spacelike separated from $\gamma$; i.e., null separations from $\gamma$ are rare.

\section{Preliminaries}
\label{sec:prelim}
We begin by reviewing known violations of strong subadditivity (SSA) for HRT areas in the presence of a finite radius cutoff. In each case, we will see that SSA is nevertheless satisfied by restricted maximin areas. We then establish notation for the remainder of the paper, formally define our restricted maximin surfaces, and review useful results from \cite{Wall:2012uf} regarding null congruences that touch at a point.

\subsection{Violations of HRT Strong Subadditivity}
\label{sec:counter}

We now review several examples of strong subadditivity violations for HRT areas from \cite{Sanches:2016sxy} and \cite{Hubeny}.  In each example, we will see that the restricted maximin areas in fact satisfy SSA.

\subsubsection{Violations of Subadditivity Deep in the Bulk}
\label{sec:counter-flat}
Our first class of examples was described in \cite{Sanches:2016sxy}. In these situations, the HRT areas violate not only strong subadditivity but also the weaker subadditivity inequality $S_A + S_B \geq S_{AB}$. For simplicity, we consider the case of $d+1=3$ bulk dimensions so that the codimension 2 cutoff surface $\gamma$ is a curve. Specifically, we consider a case where $\gamma$ contains two null geodesic segments $A$ and $B$, whose past ends coincide at some point $p$, and where $A$ and $B$ are small enough that we may approximate the spacetime near them as flat Minkowski space; see figure \ref{fig:plane}. In particular, $A$ and $B$ will then lie in a common timelike plane.  In the approximation that the spacetime is flat, the extremal surface $x(A)$ associated with subregion $A$ coincides with $A$. Similarly, $x(B)$ then coincides with $B$. However, the extremal surface $x(AB)$ will be a spacelike curve  with $S(AB)=|x(AB)|>0$, where $|a|$ denotes the proper length of the surface $a$. But $A$ and $B$ are null, so $S(A) = |x(A)|$ and $S(B) = |x(B)|$ both vanish,  violating subadditivity.

Suppose that we instead wish to use a standard maximin construction as in \cite{Wall:2012uf}.  This would require the specification of a codimension 1 cutoff surface.  If we take this cutoff to include the entire timelike plane containing $A$ and $B$, then the maximin surfaces will again be given by the extremal surfaces $x(A)=A$, $x(B)=B$, and $x(AB)$ given above.  So the same violation would remain.

In contrast, the restricted maximin surfaces $\Mm(A)$, $\Mm(B)$, and $\Mm(AB)$ of $A,B,$ and $AB$ are $\Mm(A)=A$, $\Mm(B)=B,$ and $\Mm(AB)=AB$. This is clear from the fact that their areas all vanish, so they must be minimal on each allowed Cauchy surface. Maximizing zero over all Cauchy surfaces gives zero, so they are  restricted maximin surfaces as claimed. Note that their vanishing areas satisfy both SA and SSA.

\begin{figure}
    \centering
    \includegraphics[width = 0.5\linewidth]{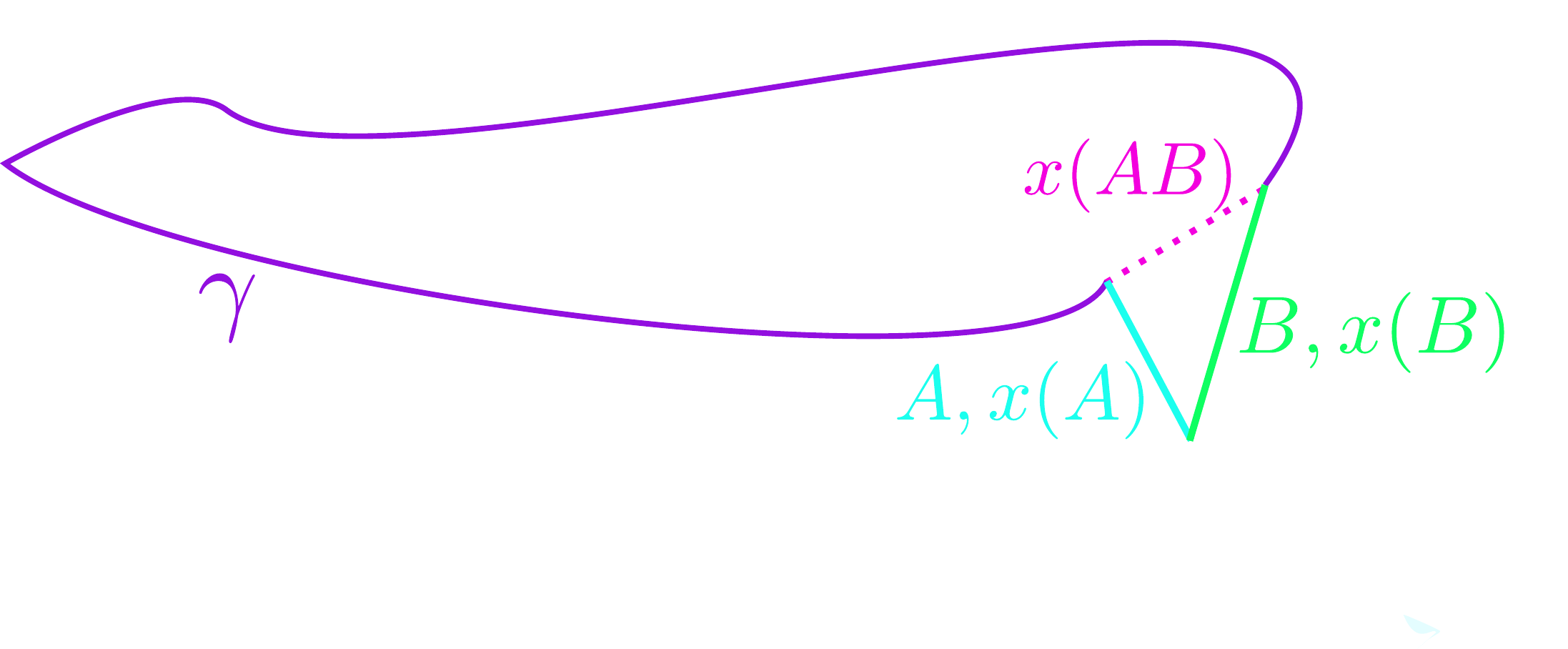}
    \caption{An example violating HRT subadditivity with a cutoff in an approximately flat (2+1) dimensional bulk. On a codimension 2 cutoff $\gamma$ (solid curve), we choose two subregions, $A$ (blue) and $B$ (green).  The cutoff $\gamma$ is constructed so that $A$ and $B$ are null segments intersecting in a timelike plane. The extremal surfaces $x(A)$ and $x(B)$ associated with $A$ and $B$ respectively coincide with $A$ and $B$.  However, the extremal surface $x(AB)$ (dotted pink) associated with $AB$ is a  spacelike curve whose non-zero area violates subadditivity.}
    \label{fig:plane}
\end{figure}

\subsubsection{Violations of Strong Subadditivity Near the Boundary}
\label{sec:countercutoff}
A second violation of HRT-SSA arises when one imposes a simple radial cutoff on empty global AdS$_3$ \cite{Hubeny}. Here we think of the cutoff as defined by a codimension 1 cylinder near the boundary (see figure \ref{fig:radialcounter1}). The construction of the example proceeds in stages. One first finds a situation saturating strong subadditivity. This may be done by starting with a null plane in the bulk, and considering two spacelike boundary intervals $A_0$ and $C_0$ formed by the intersection of this null plane and the cutoff cylinder. One also chooses another spacelike segment $B_0$ that connects $A_0$ and $C_0$ as in figure \ref{fig:radialcounter1}.  In the limit where the cutoff surface becomes the original AdS boundary, $A$ and $C$ become null and the setup resembles both that of \cite{Casini:2006es} and \cite{Lewkowycz:2019xse}.

\begin{figure}[t]
        \begin{center}
        \begin{subfigure}{.49\linewidth}
        \centering
                \includegraphics[width=0.75\textwidth]{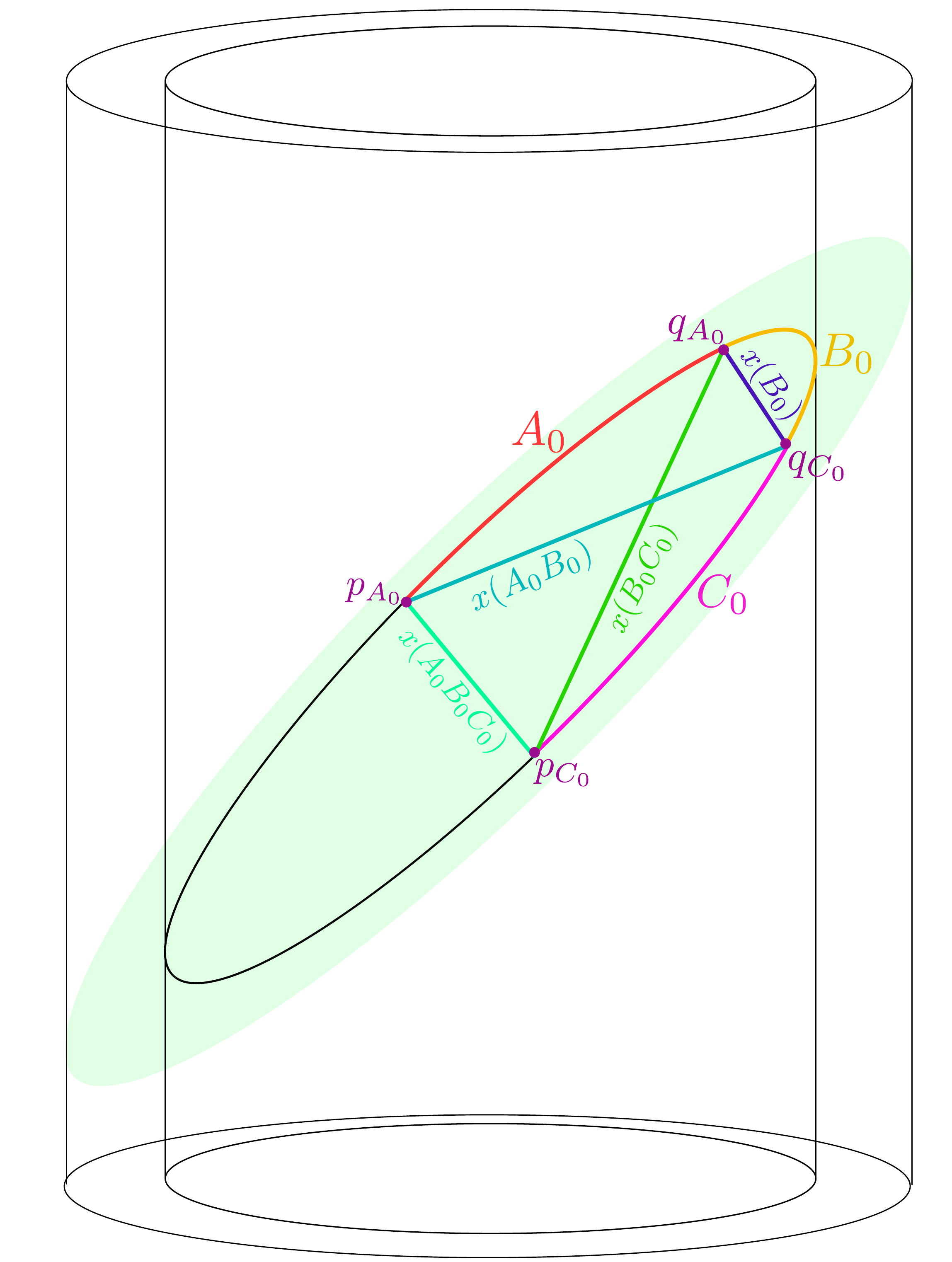}\quad
                \end{subfigure}
		\begin{subfigure}{0.49\linewidth}
        \centering
                \includegraphics[width=0.75\textwidth]{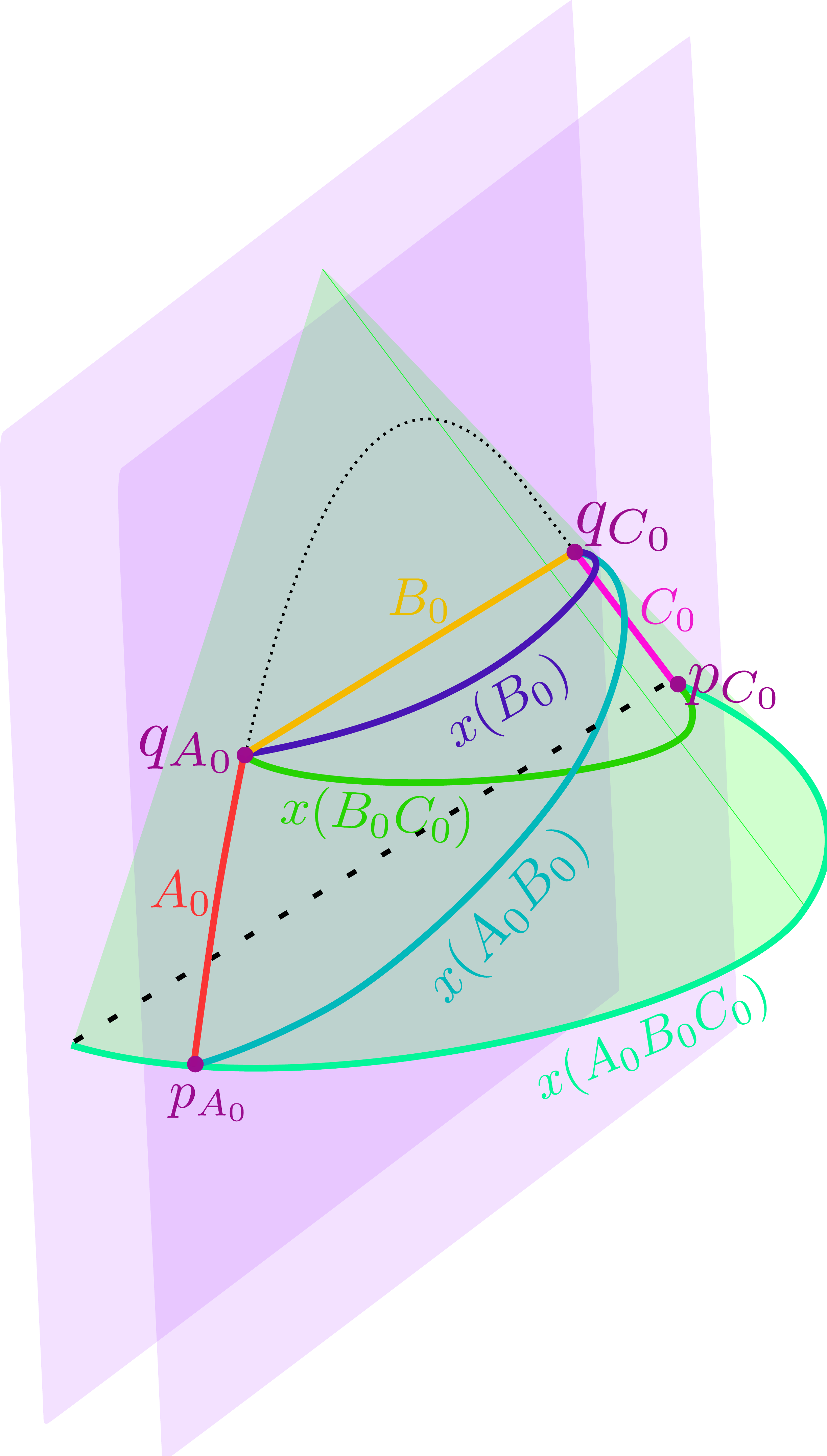}
                \end{subfigure}
        \end{center}
        \caption{An example where HRT areas saturate strong subadditivity on a cutoff surface. Left: A null plane (green) in vacuum AdS$_3$ with a cylindrical cutoff, marked intervals $A_0$, $B_0$, $C_0$, and the corresponding extremal surfaces $x(A_0),x(B_0),x(C_0)$. Right: A rough sketch of the same setting in Poincar\'e coordinates centered on $B_0$. The sketch becomes an exact representation of the conformal structure in the limit where the spacelike intervals $A_0, B_0$, and $C_0$ are small in the sense that they are contained in a small neighborhood near the top of the ellipse in the left panel.}
        \label{fig:radialcounter1}
\end{figure}

Since the bulk spacetime is just empty AdS$_3$, the HRT surfaces are known exactly.  In particular, the familiar statement that entanglement wedges and causal wedges coincide in empty AdS$_3$ means that the HRT surfaces associated with any combination of the regions $A_0,B_0,C_0$ must lie in the above null plane for any value of the cutoff.
And since the expansion of the null congruence generating this null plane vanishes, curves in the plane can be deformed along the plane without changing their length so long as the endpoints of the curves are held fixed.  One thus finds
\begin{equation}
\begin{aligned}
|x(A_0B_0C_0 )| + |x(B_0)|& = |x(A_0)| + 2|x(B_0)|+ |x(C_0)|\\
&= |x(A_0B_0)| + |x(B_0C_0)|\\
\end{aligned}
\end{equation}
so that strong subadditivity is saturated.

The key point is then that this example can be perturbed by translating the future endpoints $q_{A_0}$ and $q_{C_0}$ of $A_0$ and $C_0$ in a future timelike direction to define new future endpoints $q_{A}$ and $q_{C}$ (still spacelike separated from $p_{A_0}$, $p_{C_0}$) and new intervals $A,B$ and $C$. Since the translation is an isometry, one finds $|x(B)| = |x(B_0)|$.  Further, since the past endpoints $p_{A_0}$, $p_{C_0}$ do not move, we have $x(ABC)=x(A_0B_0C_0)$ as shown in figure \ref{fig:radialcounter1}.  But $x(AB)$ and $x(BC)$ become closer to being null, so their lengths decrease.  As a result, the HRT areas now violate SSA.

Note that since the new endpoints lie to the future of the original null slice, the extremal surfaces and intervals can no longer be placed in a common Cauchy slice. In particular, though $ABC$ remains achronal, $x(ABC)$ is now in the past of $B$. Thus, $x(ABC)$ cannot be a restricted maximin surface. The actual restricted maximin surfaces are harder to identify, but must satisfy SSA by Corollary \ref{cor:ssa} in section \ref{sec:results} below.

\begin{figure}[t]
        \begin{center}
        \begin{subfigure}{.49\linewidth}
        \centering
                \includegraphics[width=0.75\textwidth]{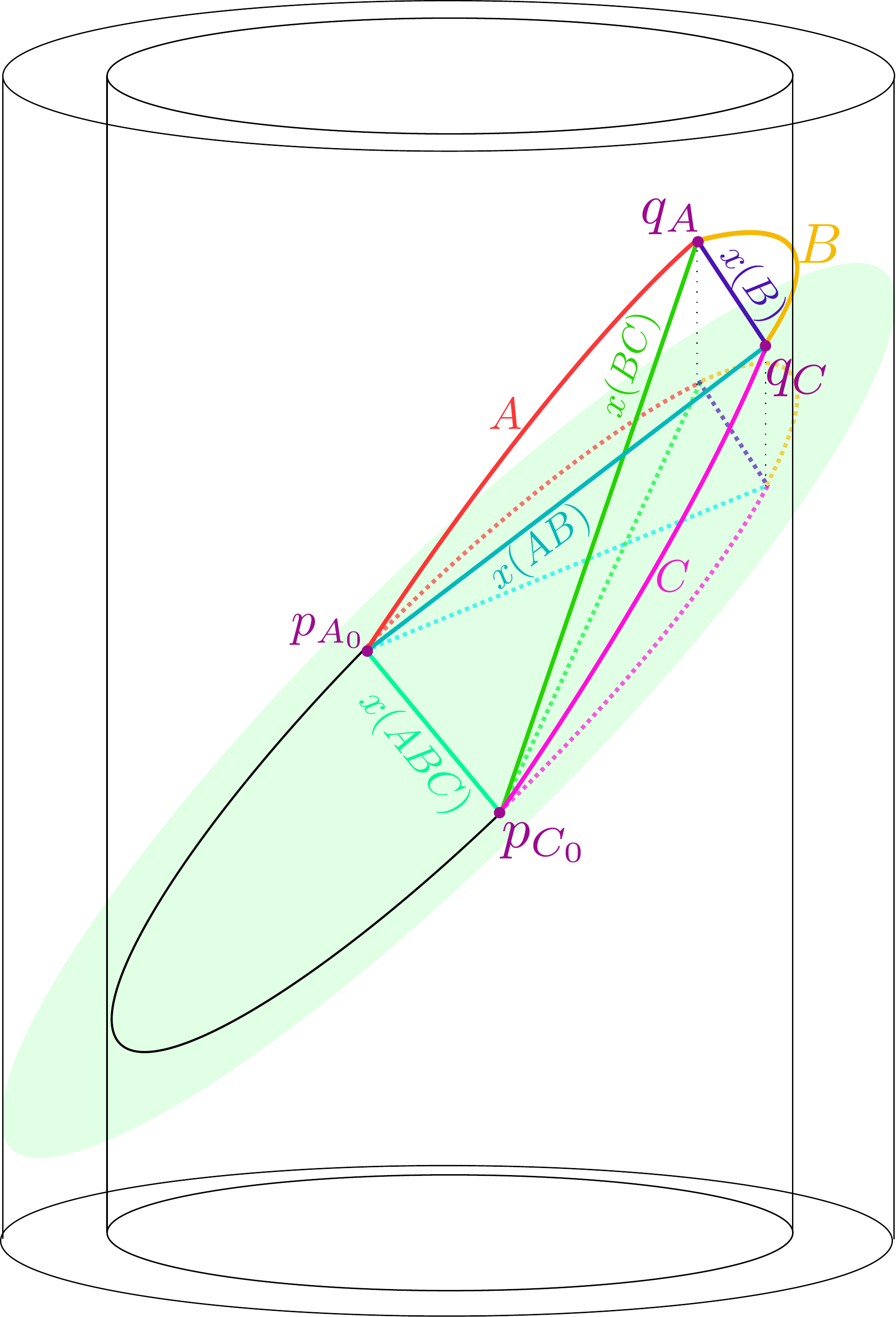}\quad
                \end{subfigure}
		\begin{subfigure}{0.49\linewidth}
        \centering
                \end{subfigure}
        \end{center}
        \caption{An example violating of strong subadditivity for HRT areas with a cutoff in vacuum AdS$_3$. The setting in figure \ref{fig:radialcounter1} has been deformed by acting with a small time translation on the future endpoints $q_{A_0}$ and $q_{C_0}$ of $A_0$ and $C_0$ to define new future endpoints $q_A$, $q_C$, and new intervals $A,B$, and $C$.}
        \label{fig:radialcounter2}
\end{figure}

\subsection{Definitions and Lemmas}
We now state the notation and conventions that will be used in the remainder of the paper.

Throughout this work we will take the bulk spacetime to be classical, smooth. The spacetime is also assumed to satisfy 1) the Null Curvature Condition (NCC), $R_{ab}k^ak^b\geq 0$, for any null vector $k^a$ and 2) a so-called generic condition such that there is nonzero null curvature $R_{ab}k^ak^b$ or shear $\sigma_{ab}\sigma^{ab}$ along any segment of any null curve.  Note, however, that strong subadditivity is a closed inequality (e.g. saturation is allowed) and that the generic condition allows the above curvature or shear to be arbitrarily small. As a result, a proof of strong subadditivity using the generic condition immediately implies that strong subadditivity continues to hold even when the generic condition is not enforced. To see this, one need only approximate the spacetime in which the generic condition fails (but all other assumptions hold) by a sequence of generic spacetimes and take an appropriate limit.

In principle, we would like to allow the cutoff surface $\gamma$ to be any closed codimension 2 achronal submanifold of the bulk spacetime.  We emphasize that achronality is defined by the bulk causal structure. Indeed, we introduce no notion of a codimension 1 boundary. In fact, we could weaken the above condition to also allow portions of $\gamma$ to lie along any asymptotically locally anti-de Sitter (AlAdS) boundary (and so, in particular, we do not require $\gamma$ to be compact). 

While the examples considered above include cutoff surfaces $\gamma$ with sharp corners or null components,  we will take $\gamma$ to be smooth and spacelike for the purposes of making the arguments below. But again, because strong subadditivity is a closed inequality, a proof for smooth spacelike $\gamma$ immediately implies that strong subadditivity holds for e.g. all piecewise smooth, achronal $\gamma$, as such a $\gamma$ can be approximated by the appropriate limit of a sequence of smooth spacelike curves.

The fact that we treat $\gamma$ as a cutoff means that our bulk spacetime is the domain of dependence $D(\gamma)$ of an anchronal surface $\Sigma$ with boundary $\partial \Sigma = \gamma$, so that we effectively work in a  the globally hyperbolic spacetime $D(\gamma)$.  We thus use the term Cauchy surface to refer to $\Sigma$ or any other achronal $\Sigma'$ with the same domain of dependence.  We  use  $\mathcal{C}_{\gamma}$ to denote the set of all such Cauchy surfaces for $D(\gamma)$.

There are various possible subtleties associated with the fact that general Cauchy surfaces need not be smooth.  It is not clear that all such subtleties were explicitly addressed in the original maximin paper \cite{Wall:2012uf}, and we will not attempt to do so here. We will instead assume below that all relevant Cauchy surfaces $\Sigma$ are at least piecewise smooth and leave treatment of the more general case for future work.   Note that the piecewise smooth case allows points $p$ on $\Sigma$ where the space of tangent vectors at $p$ depends on the direction from which $p$ is approached.  We expect this generalization of the smooth case to be important, as our maximin procedure may give rise to surfaces $\Sigma$ that partially coincide with the boundary $\partial D(\gamma)$.  Failures of smoothness will certainly occur at caustics of the null congruences along this boundary, and they may also arise when $p$ lies at the boundary of $\Sigma \cup \partial D(\gamma)$.  We will similarly assume below that any maximin surface is piecewise smooth. 

As one may expect, we will make significant use of the ingoing and outgoing future pointing null congruences orthogonal to $\gamma$. We denote their affinely parametrized tangent vectors respectively by $k^a$ and $l^a$ respectively, with $\theta_k$ and $\theta_l$ the corresponding null expansions. 

Having established this notation and the above conventions, we now define our restricted maximin surfaces in two steps.

\begin{defn} For any subregion $A\subset\gamma$, and for each element $\Sigma \in C_\gamma$, let min$(A,\Sigma)$ be the codimension 2 surface in $\Sigma$ which is anchored to $\partial A$, homologous to $A$ within $\Sigma$, and has minimal area consistent with the above constraints. If there are multiple such surfaces, min$(A,\Sigma)$ can refer to any of them.
\end{defn}

\begin{defn}The \textit{restricted maximin surface} $\Mm(A)$ is then the minimal surface min$(A,\Sigma)$ whose area is  maximal with respect to variations over surfaces $\Sigma \in C_\gamma$. By contrast, $x(A)$ will by used to denote the smallest extremal surface anchored to $\partial A$.
\end{defn}

We also mention the following two lemmas that will be used in  the next section.

\begin{lem}
The boundary of $D(\gamma)$ is $\partial D(\gamma)=L^{+}\cup L^{-}\cup \gamma$ where $L^{+}$ is the set of points $p \notin \gamma$ that are reached by null geodesics along the vector field $k$ which start at $\gamma$ and which have not arrived at any conjugate point or nonlocal geodesic intersection  before reaching  $p$.   $L^{-}$ is defined similarly with $k$ replaced by $-l$. Note that $L^{\pm}$ includes points on caustics.
\end{lem}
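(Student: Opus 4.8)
The plan is to identify $\partial D(\gamma)$ with the Cauchy horizon of the achronal slice $\Sigma$ (with $\partial\Sigma=\gamma$) whose domain of dependence is $D(\gamma)$, and then to apply the standard structure theory for Cauchy horizons, with the role of the ``edge'' played by the smooth spacelike codimension 2 surface $\gamma$. First I would recall (see Hawking--Ellis or Wald, and the use of such results in \cite{Wall:2012uf}) that for the closed achronal $\Sigma$ with $\mathrm{edge}(\Sigma)=\gamma$ one has $\partial D(\gamma)=H^{+}(\Sigma)\cup H^{-}(\Sigma)\cup\gamma$, and that $H^{+}(\Sigma)\setminus\gamma$ is, where non-empty, an achronal $C^{1-}$ codimension 1 submanifold ruled by null geodesic generators, each of which---followed toward $\Sigma$---either has a past endpoint on $\gamma$ or is past inextendible. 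Since $\gamma$ is a genuine cutoff (i.e.\ $\overline{D(\gamma)}$ is the whole spacetime, or in the AlAdS case ends only on the AlAdS boundary, which we fold into $\gamma$), the past-inextendible alternative is excluded, so every generator of $H^{+}(\Sigma)$ reaches $\gamma$ to the past and every generator of $H^{-}(\Sigma)$ reaches $\gamma$ to the future.

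Next I would pin down the initial direction of such a generator. Let $q\in\gamma$ be the past endpoint of a generator $\lambda$ of $H^{+}(\Sigma)$. In a normal neighbourhood of $q$, and using that $\gamma$ is a smooth spacelike codimension 2 surface bounding the achronal $\Sigma$, $D(\gamma)$ agrees to leading order with the flat model in which $\Sigma$ is a half-hyperplane with edge $\gamma$; there $H^{+}$ is the null hyperplane generated by the unique future null normal of $\gamma$ pointing toward $D(\gamma)$, namely $k$. Achronality of $H^{+}(\Sigma)\cup\gamma$ forbids $\lambda'(q)$ from having a nonzero component tangent to $\gamma$ (such a component would make nearby points of $\gamma$ chronologically related to points of $\lambda$), so $\lambda'(q)\propto k$. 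The same argument applied to $H^{-}(\Sigma)$ gives generators tangent to $-l$ at their future endpoints on $\gamma$.

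It then remains to trace the generators. For $L^{+}\subseteq\partial D(\gamma)$: the affinely parametrized null geodesic from $q\in\gamma$ with tangent $k$ starts on $H^{+}(\Sigma)$ by the local model, and by Penrose's structure theorem for achronal boundaries it remains a generator of $H^{+}(\Sigma)$---hence stays in $\partial D(\gamma)$---exactly until it first reaches a point conjugate to $\gamma$ along the $k$-congruence (a caustic) or crosses another such generator (a nonlocal intersection), past which it no longer lies on $\partial D(\gamma)$. For the reverse inclusion $H^{+}(\Sigma)\setminus\gamma\subseteq L^{+}$: any such point $p$ lies on a generator whose past endpoint is some $q\in\gamma$ with tangent $k$ there, and since $p$ is still on $H^{+}(\Sigma)$ this generator has not yet encountered a caustic or crossing, so $p\in L^{+}$. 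Hence $H^{+}(\Sigma)\setminus\gamma=L^{+}$; repeating verbatim with $k\to-l$ gives $H^{-}(\Sigma)\setminus\gamma=L^{-}$, and combining with $\partial D(\gamma)=H^{+}(\Sigma)\cup H^{-}(\Sigma)\cup\gamma$ proves the lemma. The answer is manifestly independent of the auxiliary choice of $\Sigma$, since $k$, $l$, caustics, and nonlocal intersections refer only to $\gamma$ and $D(\gamma)$.

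The main obstacles are the two soft global inputs. First, excluding past/future-inextendible generators of $H^{\pm}(\Sigma)$ that never reach $\gamma$ genuinely uses that $\gamma$ is a true cutoff rather than an interior slice, and wants extra care in the AlAdS case. Second, the regularity caveats flagged in the text---Cauchy surfaces and their horizons are in general only $C^{1-}$ and may carry caustics and creases---mean that ``$\lambda'(q)\propto k$'' and ``remains a generator until the first caustic or crossing'' must be phrased so as to tolerate this limited differentiability; neither issue, however, affects the (essentially measure-theoretic) description of $\partial D(\gamma)$ claimed here.
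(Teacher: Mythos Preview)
The paper does not give a self-contained proof here; it simply observes that the lemma is Theorem~1 of \cite{Akers:2017nrr} restated in the present notation and defers entirely to that reference. Your proposal instead sketches the underlying Lorentzian-geometry argument directly: decompose $\partial D(\gamma)$ as $H^{+}(\Sigma)\cup H^{-}(\Sigma)\cup\gamma$, invoke the standard structure theorem for Cauchy horizons to show that every generator of $H^{\pm}(\Sigma)$ has an endpoint on the edge $\gamma$, argue that achronality forces the initial tangent to be a null normal to $\gamma$ and that the ``inward'' side selects $k$ (respectively $-l$), and finally use the achronal-boundary theorem that a generator remains on the horizon exactly up to its first conjugate point or nonlocal crossing. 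This is precisely the kind of argument one expects the cited theorem to encapsulate, so your route is less a different proof than an unpacking of the citation. The two soft points you flag---ruling out inextendible horizon generators that never meet $\gamma$, and the limited ($C^{1-}$) regularity of Cauchy horizons---are genuine and would require care in a fully rigorous treatment; they are, however, the same caveats carried by the cited result and by the paper's own acknowledged regularity assumptions, and they do not affect how the lemma is used downstream in Section~\ref{sec:results}.
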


This  lemma is precisely Theorem 1 of \cite{Akers:2017nrr} restated in our particular context using the above notation.  As a result, it follows immediately from their argument.

\begin{lem}
Suppose $N_1$ and $N_2$ are two smooth null congruences that are tangent at some point $p$ on a Cauchy surface $\Sigma$. If $N_2$ is nowhere to the (chronological) past of $N_1$, then in any sufficiently small neighborhood of $p$, either i) $N_1$ and $N_2$ coincide, or ii) there exists a point $y$ at which $\theta_{N_2} >\theta_{N_1}$.  Here we may use any smooth map between $N_1$ and $N_2$ to compare points on the two surfaces.
\label{lemma:null}
\end{lem}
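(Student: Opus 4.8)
\emph{Proof plan.} The statement is local and kinematical — no curvature hypothesis is needed, consistent with the way the lemma is phrased — so the plan is to introduce adapted coordinates near $p$, write $N_2$ as a graph over $N_1$, and combine a pointwise comparison of the two null second fundamental forms with a Hopf-type boundary maximum principle. Concretely, I would first choose coordinates $(v,\lambda,x^i)$ in a neighborhood of $p$ adapted to $N_1$: set $N_1=\{v=0\}$, let $\lambda$ be an affine parameter along the null generators of $N_1$ with the $x^i$ labelling those generators, orient $v$ towards the future, and place $p$ at $\lambda=x^i=0$. Since $N_2$ is a smooth null hypersurface through $p$ with $T_pN_2=T_pN_1$, in a small enough neighborhood it is a graph $N_2=\{v=F(\lambda,x^i)\}$ with $F(p)=0$ and $dF|_p=0$ (tangency), while the hypothesis that $N_2$ lies nowhere to the chronological past of $N_1$ becomes $F\ge 0$ throughout the neighborhood. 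If $F\equiv 0$ near $p$ we are in alternative (i); so assume instead that $F$ is positive somewhere and also that alternative (ii) fails, i.e.\ $\theta_{N_2}\le\theta_{N_1}$ throughout a neighborhood when the two congruences are compared via the identification $(\lambda,x^i)\leftrightarrow(\lambda,x^i)$. The goal is to derive a contradiction.

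Two observations then drive the argument. First, the null generator of $N_1$ through $p$ and the null generator of $N_2$ through $p$ are null geodesics with the same tangent direction at $p$, hence coincide near $p$ by uniqueness of geodesics; call this curve $\mu$. Then $\mu\subset N_1\cap N_2$, so $F$ vanishes along $\mu$, and since $F\ge0$ every point of the zero set $Z:=\{F=0\}\supset\mu$ is a minimum of $F$, so $dF$ vanishes on all of $Z$. Second, expanding the focusing (Raychaudhuri) equation for $N_2=\{v=F\}$ about $N_1$ yields $\theta_{N_2}-\theta_{N_1}=\mathcal{L}F$ for a second-order operator $\mathcal{L}$ whose principal part in the cut directions is (sign-definite) elliptic, the remaining terms being transport derivatives along $\partial_\lambda$ together with lower-order terms, and with the nonlinear-in-$F$ pieces absorbed into the coefficients since $F$ and $dF$ are small; this is the operator already used in \cite{Wall:2012uf,Akers:2017nrr}. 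Restricting to $\mu$, all the terms of $\mathcal{L}F$ that involve $\partial_\lambda$ or a factor of $dF$ or $F$ drop out, so $\mathcal{L}F|_\mu$ equals a positive multiple of the cut-trace of the Hessian of $F$ along $\mu$; this trace is $\ge 0$ because $F\ge0$ is minimized on $\mu$, while the failure of (ii) forces it $\le 0$, so the cut-Hessian of $F$ vanishes along $\mu$, and with the first observation the full Hessian of $F$ vanishes along $\mu$.

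The last step — which I expect to be the main obstacle — is to promote ``$F\ge0$, $\mathcal{L}F\le 0$ near $p$, and $F$ positive somewhere'' into a contradiction via a maximum principle. Let $\Omega:=\{F>0\}$, a nonempty open set whose boundary lies in $Z$. On $\Omega$ one has $F>0$ and $\mathcal{L}F\le 0$, while at any $q\in\partial\Omega$ the function $F$ equals $0$, which is its global minimum near $p$, so $dF(q)=0$. The Hopf boundary-point lemma applied to $F$ on $\Omega$ at such a $q$ (with an interior-ball condition, arranged by a small perturbation if necessary) would give a strictly negative outward normal derivative of $F$ at $q$, contradicting $dF(q)=0$; hence $\Omega=\emptyset$, i.e.\ $F\equiv 0$ near $p$, contrary to assumption. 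The care required here is exactly that $\mathcal{L}$ is only elliptic in the cut directions — it is a transport operator along the generators — and that the signs of its lower-order coefficients are not controlled a priori; I would handle this by using the transport structure to propagate the vanishing of the low-order jet of $F$ off of $\mu$ into a neighborhood, reducing the maximum principle to the genuinely elliptic cross-sectional operator (on a slice transverse to the generators, where $F$ vanishes to high enough order on $Z$ that the offending lower-order terms become harmless), or by invoking a Hopf lemma adapted to degenerate-elliptic operators of this specific form. In $1{+}1$ bulk dimensions, where the cuts are points, the lemma is immediate, since two null hypersurfaces tangent at $p$ coincide near $p$.
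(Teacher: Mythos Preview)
The paper does not prove this lemma; immediately after stating it, the text says only ``This lemma is Theorem 4 of \cite{Wall:2012uf}'' and moves on. There is thus no in-paper argument to compare against---what you have written is an attempt to reprove Wall's result from scratch.

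Your overall strategy is the right one and is the Galloway-type null maximum principle that underlies the cited theorem: represent $N_2$ as a graph over $N_1$, relate $\theta_{N_2}-\theta_{N_1}$ to a second-order operator acting on the graph function, and invoke a strong maximum/Hopf principle to force coincidence. Your observation that the null generator through $p$ is shared by both hypersurfaces (by uniqueness of geodesics) is correct and useful. The genuine gap is exactly the one you flag yourself: with the spacetime graph $F(\lambda,x^i)$, the operator $\mathcal{L}$ is elliptic only in the transverse $x^i$ directions and is first-order transport along $\partial_\lambda$, so neither the elliptic strong maximum principle nor the Hopf boundary lemma applies as stated. The two patches you propose---propagating low-order jets along the generator direction, or invoking an unspecified degenerate-elliptic Hopf lemma---are not fleshed out enough to count as a proof; this is the step where a reader would stop you.

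A cleaner route, closer to how such arguments are organized in the references you cite, sidesteps the degeneracy. The lemma is stated on a Cauchy surface $\Sigma$ and explicitly allows any smooth identification of points on $N_1$ and $N_2$. So instead of the spacetime graph, compare the codimension-two cuts $\sigma_i=N_i\cap\Sigma$, writing $\sigma_2$ as a normal graph $f(x^i)$ over $\sigma_1$ inside $\Sigma$. The difference in null expansions is then controlled by the linearized null-mean-curvature operator on $\sigma_1$, which is genuinely elliptic in the cut variables alone (there is no $\lambda$ direction at all), and the textbook strong maximum principle gives either $f\equiv 0$ near $p$---hence $\sigma_1=\sigma_2$, and since a null hypersurface is determined by its cut together with the null-normal direction fixed by tangency at $p$, $N_1=N_2$ near $p$---or else a nearby point with $\theta_{N_2}>\theta_{N_1}$. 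Your coordinates mix the cut and generator directions and thereby make the analysis look harder than it needs to be.
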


This lemma is Theorem 4 of \cite{Wall:2012uf}. Because we will often wish to study congruences which are not obviously smooth, we pause to state the following immediate corollary, which we number zero for later convenience.

\begin{cor} Note that both cases (i) and (ii) of Lemma \ref{lemma:null} require $\theta_{N_2}(x) \ge\theta_{N_1}(x)$.  Indeed, equality is manifest in case (i), and otherwise $\theta_{N_2}(y) > \theta_{N_1}(y)$ in sufficiently small neighborhoods of $x$. Since the inequality $\theta_{N_2}(x) \ge\theta_{N_1}(x)$ is closed, it must also hold whenever $N_1,N_2$ can be approximated by pairs of smooth congruences that are tangent at $x$, even if $N_1,N_2$ are not smooth themselves.  When $\theta_{N_2}(x)$ or $\theta_{N_1}(x)$ is ill-defined, we understand the inequality to hold for all smooth approximating congruences\footnote{Alternatively, one may treat the corresponding null expansions as having delta-function-like terms at places where piece-wise smooth codimension 2 surfaces from which they are orthogonally launched fail to be $C^1$.}.  This allows one to deal with the case where $N_1,N_2$ have caustics or non-local intersections at $x$. And this in turn means that we can apply the result to congruences launched orthogonally from codimension 2 surfaces that are only piecewise smooth, and perhaps in worse cases as well.
\label{cor:bend}
\end{cor}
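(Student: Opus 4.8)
The plan is to get Corollary \ref{cor:bend} in two moves: first settle the case where $N_1$ and $N_2$ are smooth by a short limiting argument applied to Lemma \ref{lemma:null}, and then bootstrap to the non-smooth case by approximation, using that a non-strict inequality between expansions survives limits.

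For the smooth case I would simply unpack the two alternatives of Lemma \ref{lemma:null}. In case (i) the congruences coincide in a neighborhood of $x$, so $\theta_{N_2}(x)=\theta_{N_1}(x)$ trivially. In case (ii) every sufficiently small neighborhood $U$ of $x$ contains a point $y$ with $\theta_{N_2}(y)>\theta_{N_1}(y)$; choosing a nested sequence $U_n\downarrow\{x\}$ produces $y_n\to x$ with $\theta_{N_2}(y_n)-\theta_{N_1}(y_n)>0$, and continuity of the expansions along smooth congruences (together with continuity of the identifying map, which is built into the hypothesis of Lemma \ref{lemma:null}) then forces $\theta_{N_2}(x)\ge\theta_{N_1}(x)$ in the limit. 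This is the whole content of the first two sentences of the corollary made precise.

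To reach the non-smooth case I would take sequences of smooth congruences $N_1^{(\epsilon)},N_2^{(\epsilon)}$, each pair tangent at $x$ with $N_2^{(\epsilon)}$ nowhere to the past of $N_1^{(\epsilon)}$, converging to $N_1,N_2$ strongly enough near $x$ that the expansions converge. The smooth case gives $\theta_{N_2^{(\epsilon)}}(x)\ge\theta_{N_1^{(\epsilon)}}(x)$ for each $\epsilon$, and since $\ge$ is a closed condition the limit $\theta_{N_2}(x)\ge\theta_{N_1}(x)$ follows whenever both sides are defined; where one expansion is ill-defined (a caustic or a nonlocal self-intersection) I would instead read the conclusion as the statement that the inequality holds along every such approximating family, which is precisely the distributional interpretation indicated in the footnote. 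Specializing, when $N_1,N_2$ are launched orthogonally from piecewise-smooth codimension $2$ surfaces one builds the required $N_i^{(\epsilon)}$ by mollifying those surfaces on scale $\epsilon$: away from the non-smooth locus this changes nothing to leading order, and near the locus it introduces only a controlled perturbation that vanishes as $\epsilon\to 0$, so the tangency at $x$ and the one-sided causal ordering needed above are inherited from the piecewise-smooth surfaces themselves.

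The main obstacle is exactly this construction of smooth approximants: they must simultaneously preserve the one-sided relation ``$N_2$ nowhere to the past of $N_1$'' uniformly near $x$, stay tangent at $x$, and converge well enough that $\theta$ converges (or converges distributionally). This is most delicate at caustics and nonlocal intersections, where $\theta\to-\infty$, so one must make sure the smoothing does not spuriously reorder the two surfaces there. Smoothing the launching codimension $2$ surfaces rather than the congruences directly is the cleanest way to keep this under control, reducing the point to the elementary fact that a small mollification of a piecewise-smooth spacelike surface preserves any strict causal separation present off the non-smooth locus and disturbs the picture only on a set shrinking to that locus.
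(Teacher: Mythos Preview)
Your proposal is correct and follows essentially the same reasoning as the paper. The paper presents this corollary as a self-justifying remark with no separate proof environment: the argument is contained in the statement itself, and your two moves---continuity plus a sequence $y_n\to x$ for the smooth case, then passage to limits via closedness of $\ge$ for the non-smooth case---are exactly what the paper sketches, with your mollification discussion being a reasonable elaboration of the approximation step that the paper leaves implicit.
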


\smallskip


\begin{figure}[h]
\centering
\includegraphics[width = 0.5\linewidth]{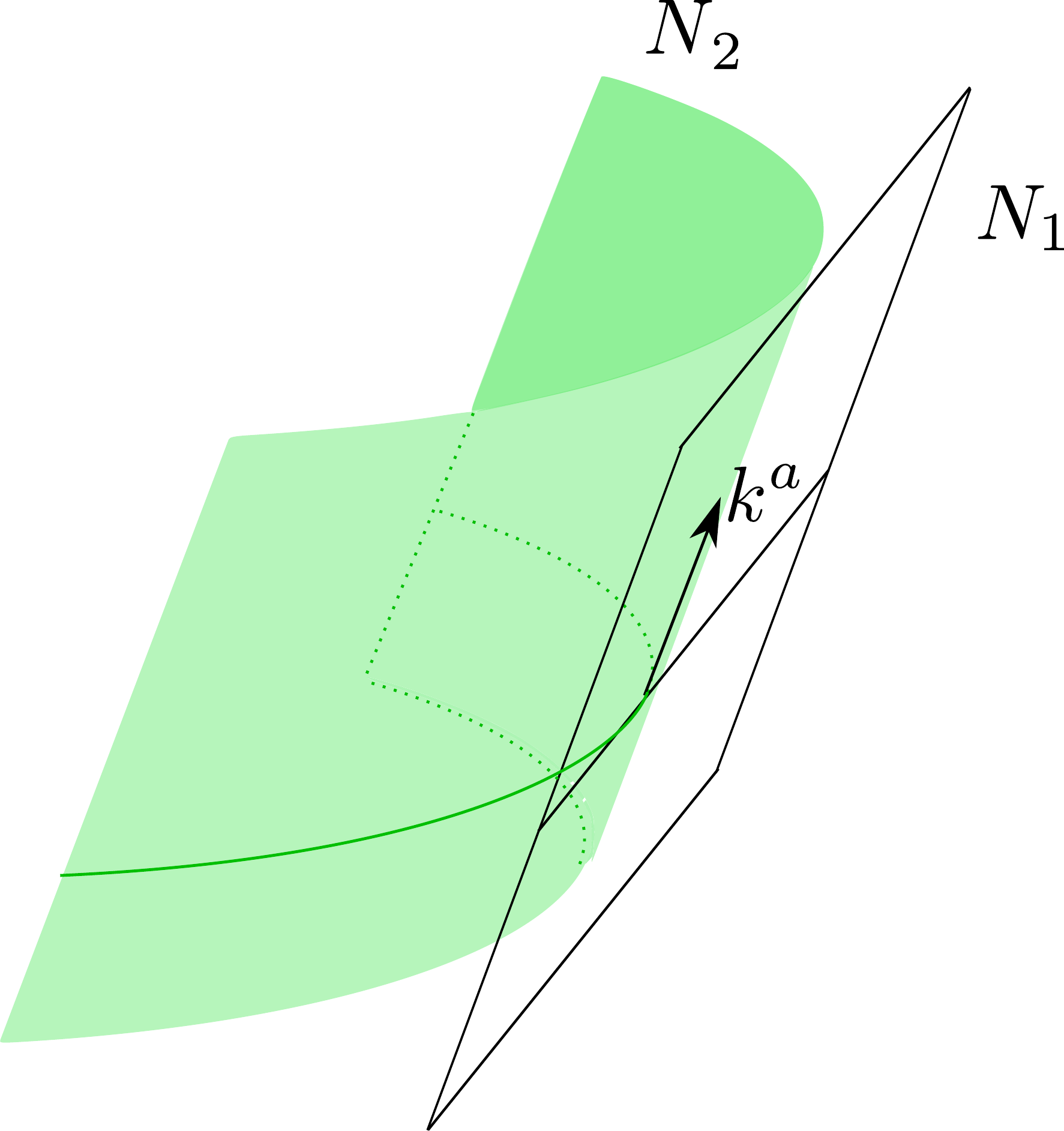}
\caption{The lines depict spacelike cuts of null congruences $N_1$ and $N_2$, tangent at a point. $N_2$ is nowhere to the past of $N_1$ and is thus expanding faster by Lemma \ref{lemma:null}}
\label{fig:nullfol}
\end{figure}

As a final preliminary remark, we recall that \cite{Wall:2012uf} suggested that one may always choose a maximin surface that is stable in the following sense: when a Cauchy surface $\Sigma$  on which $\Mm(A)$ is minimal is deformed to a nearby Cauchy surface $\Sigma_\epsilon$, the minimal surface on $\Sigma_\epsilon$ should remain close to $\Mm(A)$.  Indeed, it was suggested that when $\Mm(A)$ is unique it is always stable and, when it is not, that at least one of the allowed choices will be stable (and one should make such a choice).
When $\Mm(A)$
is both spacelike and smooth, this follows by the technical argument in
section 3.5 of \cite{Wall:2012uf}.
But more generally the proof is incomplete.
Below, we follow both \cite{Wall:2012uf} and \cite{Marolf:2019bgj} in simply assuming this property to hold and studying its implications.

\section{Main Results}
\label{sec:results}

In deriving the desired results, we will follow the same basic proof strategy as in \cite{Wall:2012uf}.  The first ingredient in this strategy is the realization that the expansions of null congruences orthogonal to maximin surfaces $\Mm(A)$ are significantly constrained.  In particular, Claim \ref{claim:ext} below will establish that at any point $p \in \Mm(A)$ with $p \not \in \gamma$, null congruences orthogonal to $\Mm(A)$ that do not immediately leave $D(\gamma)$ have non-positive expansion.  For example, if $p \in L^+$, then it can have positive expansion $\theta_l$, but the expansions $\theta_k, \theta_{-k}, \theta_{-l}$ must be non-positive.  In particular, $\theta_k = -\theta_{-k}$ will vanish.

The second ingredient is the notion of joint (restricted-) maximin surfaces.  Such surfaces are defined by choosing arbitrary $c,d>0$ and considering the quantity $Z = c \text{Area(}X_{A})+d \text{Area(}X_{B})$.  In particular, for each $\Sigma' \in C_\gamma$, we define the surfaces $\min(Z,\Sigma')_A$ and $\min(Z,\Sigma')_B$ to be the surfaces that minimize $Z$, subject to the constraint that $X_{A}$ and $X_B $ are codimension 2 surfaces in $\Sigma'$ anchored respectively to $\partial A$ and  $\partial B$, and satisfying associated homology constraints on $\Sigma'$.  Note that since $c,d >0$, the surfaces
$\min(Z,\Sigma')_A$ and $\min(Z,\Sigma')_B$ must each be separately minimal on $\Sigma'$.  So for all such $c,d$, we have $\min(Z,\Sigma')_A = \min(A,\Sigma')$ and $\min(Z,\Sigma')_B = \min(B,\Sigma')$. 
The joint restricted-maximin surfaces $Y_A$ and $Y_B$ are then defined by maximizing $Z$ with respect to variations of $\Sigma'$. We assume that $Y_A$ and $Y_B$ are stable in parallel with our assumption regarding $\Mm(A)$.

By construction, the joint restricted-maximin surfaces are both minimal on the same slice $\Sigma$.  As a result, for such surfaces we can use the minimal surface argument of \cite{Headrick:2007km} to establish SSA.  Indeed, while SSA involves three regions of $\gamma$, and thus three surfaces in the bulk, it turns out to be enough to work with pairs $A,B$ of regions of $\gamma$ that are nested in the sense that $A \subset B$. As a result, SSA for the original restricted maximin surfaces will follow if we can show that they agree with joint restricted-maximin surfaces for nested pairs $B \supset A$.

An important component of doing so is to show that, at least when $B \supset A$, the expansions of null congruences orthogonal to $Y_A,Y_B$ obey precisely the same constraints as those described above for null congruences orthogonal to $\Mm(A)$ and $\Mm(B)$.  For clarity, we state such results for $Y_A,Y_B$ below as corollaries to the corresponding claims for $\Mm(A), \Mm(B)$, explaining any additional relevant details.
 
With the above-mentioned results about null expansions in hand,  
at least when $B \supset A$, one can indeed establish that $Y_A = \Mm(A)$ and $Y_B = \Mm(B)$.  The main idea is to use the above results for null expansions to show that failure to coincide is inconsistent with $\Mm(A)$ or $\Mm(B)$ being minimal on any Cauchy slice $\Sigma$.  A key tool in such comparisons is the notion  introduced in \cite{Wall:2012uf} of the {representative} $\tilde{y}(\Sigma)$ on a Cauchy slice $\Sigma$ of a codimension 2 achronal surface $y$.  The precise definition we use largely follows the presentation in \cite{Marolf:2019bgj}, but fixes issues associated with the fact that Cauchy surfaces $\Sigma$ become null:

\begin{defn} We begin with a spacetime-codimension 2 achronal surface $y$ lying in a Cauchy slice $\Sigma'$, and which is homologous within $\Sigma'$ to some boundary region $A$. (Note that since $\Sigma'$ is codimension 1 in the spacetime, $y$ is codimension 1 with respect to $\Sigma'$.)
Given another Cauchy surface $\Sigma$, representatives $\tilde y(\Sigma)$  are defined by observing that $y$ splits the original slice $\Sigma'$ into two pieces: $\Sigma'_A$ (with boundary $y \cup A$) and $\Sigma'_{\bar A}$ (with boundary $y \cup \bar A$).  Let the associated domains of dependence be $D(\Sigma'_A)$ and $D(\Sigma'_{\bar A})$ with boundaries $\partial D(\Sigma'_A)$ and $\partial D(\Sigma'_{\bar A})$.  A representative $\tilde y(\Sigma)$ on $\Sigma$ of $y$ is a codimension 2 surface homologous to $A$ within $\Sigma$ that contains $y \cap \gamma$ and lies in both $\Sigma$ and
$\partial D(\Sigma'_A) \cup \partial D(\Sigma'_{\bar A})$.   Note that at least one representative must exist on any $\Sigma$ since both $\Sigma$ and $ \Sigma'$ are Cauchy surfaces for $D_\gamma$ and thus both contain $\gamma$ itself.  In particular, when $\Sigma$ is spacelike, two possible choices of representative are $\Sigma \cap \partial D(\Sigma'_A)$ and $\Sigma \cap \partial D(\Sigma'_{\bar A})$. One may sometimes wish to state which such representative one chooses, but in fact the choice of representative will not matter for our purposes below.
\end{defn}

This definition becomes useful when combined with the above results about null expansions.
Since $\tilde y(\Sigma) \subset \partial D(\Sigma'_A) \cup \partial D(\Sigma'_{\bar A})$, each point on $y(\Sigma)$ can be reached from a distinct point on $y$ by following a generator of a null congruence orthogonal to $y$ that remains in $D(\gamma)$, and where the generator is free of conjugate points or non-local self-intersections.  If $y$ is a connected component of a restricted maximin surface or a joint restricted-maximin surface then, and if $y$ itself does not lie in $\Sigma$, the generic condition and the above-mentioned results from Claim \ref{claim:ext} and Corollary \ref{cor2} guarantee the area of $\tilde y(\Sigma)$ to be strictly smaller than that of $y$.  Here it is important that $\tilde y(\Sigma)$ contains $y \cap \gamma$ so that, at such points (where the null expansions are not controlled), the null geodesic is followed only for vanishing affine-parameter distance.

One can now quickly conclude that $Y_A = \Mm(A)$.  Since both surfaces are of the form $\min(\Sigma',A)$, the maximization step of the maximin procedure implies $\A(\Mm(A))\ge \A(Y_A)$.  But if $Y_A$ does not coincide with $\Mm(A)$, then we may choose the Cauchy surface $\Sigma$ on which $\Mm(A)$ is minimal so that at least part of $Y_A$ does not lie on this surface\footnote{The stability condition implies that if $\Mm(A)$ is minimal on some $\Sigma_0$, then it is also minimal on all nearby slices $\Sigma_\epsilon$ that coincide with $\Sigma_0$ on some open set around $\Mm(A)$. So since maximality forbids $Y_A \subset \Mm(A)$ with $Y_A\neq \Mm(A)$, for $Y_A\neq \Mm(A)$ the condition will hold for some $\Sigma_\epsilon$.}.  The representative $\widetilde{Y_A}(\Sigma)$ on $\Sigma$ of $Y_A$ must then have smaller area than $Y_A$, contradicting minimality of $\Mm(A)$ on $\Sigma$. The identical argument also proves that $Y_B$ coincides with $\Mm(B)$.  Entanglement wedge nesting then follows from the fact that $Y_A,Y_B$ are minimal on the same slice $\Sigma$ and thus cannot cross.  Strong Subadditivity and Monogamy of mutual information follow as well.

This completes our summary of the overall proof strategy.
As in \cite{Marolf:2019bgj}, the full proofs below repeatedly use the fact that portions of $\Mm(A)$ lying entirely in the interior of $D(\gamma) $ will behave much like the maximin surfaces of \cite{Wall:2012uf}. As a result, we typically proceed by studying various cases, depending on whether relevant points lie in the interior, on $L^+$ or $L^-$, or on $\gamma$.  When dealing with $L^\pm,$ we will focus on studying $L^+$ with the understanding that analogous results immediately  follow for $L^-$. Furthermore,  even in dealing with issues associated with the boundary of $D(\gamma)$, the arguments below largely follow the structure of the derivations in \cite{Wall:2012uf}.

Before proceeding, we remind the reader that we will treat every relevant Cauchy surface and every restricted maximin surface as being piecewise smooth.  Proving this to be the case, or showing that the results hold more generally is an important open issue that is beyond the scope of this work.

\subsection{Details of the argument}

We now fill in the details of the argument sketched above, breaking the derivation into three separate claims regarding $\Mm(A)$, together with two corollaries for joint restricted-maximin surfaces.  The third claim establishes entanglement wedge nesting, and the other main results of SSA and Monogamy of Mutual Information are then an additional corollary.

\begin{claim}
\label{claim:spacelike}
Suppose $p, q \in \Mm(A)$, with $p\neq q$ and neither $p$ nor $q$ in $\gamma$. Then $p, q$ are spacelike separated.
\end{claim}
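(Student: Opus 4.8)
The plan is to argue by contradiction following the strategy of \cite{Wall:2012uf}, splitting into cases according to whether $p$ and $q$ lie in the interior of $D(\gamma)$ or on $L^{+}\cup L^{-}$. Fix a Cauchy surface $\Sigma\in C_\gamma$ on which $\Mm(A)$ is minimal and which realizes the maximin, chosen to be stable in the sense reviewed above. Since $\Sigma$ is achronal, no two of its points — in particular $p$ and $q$ — can be timelike separated, so it suffices to rule out a null separation. Assume then that a null geodesic segment $\eta$ runs from $p$ to $q$. Because the bulk is the globally hyperbolic $D(\gamma)$ with Cauchy surface $\Sigma$, any inextendible causal curve through an interior point of $\eta$ meets $\Sigma$ in a connected null-geodesic piece; since that piece must contain both $p$ and $q$, we get $\eta\subset\Sigma$. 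Thus $\Sigma$ is null along $\eta$, its normal space degenerates there, and one of the null normal congruences of $\Mm(A)$ runs tangent to $\eta$.

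The first reduction is to show $\eta\subset\Mm(A)$. If some point of $\eta$ fails to lie on $\Mm(A)$, then $\Mm(A)$ and $\eta$, both lying in $\Sigma$ and both joining $p$ to $q$, cobound a region of $\Sigma$; replacing the corresponding portion of $\Mm(A)$ by a thin collar that follows $\eta$ yields a surface homologous to $A$ within $\Sigma$, anchored to the same $\partial A$ — the modification is interior to $D(\gamma)$ since $p,q\notin\gamma$ — and of strictly smaller area, since a null curve contributes nothing. This contradicts minimality of $\Mm(A)$ on $\Sigma$, so $\eta\subset\Mm(A)$, and $\Mm(A)$ is null along a neighborhood of $\eta$.

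It remains to contradict the existence of the null segment $\eta\subset\Mm(A)\subset\Sigma$, and this is where the generic condition enters essentially. Feeding it into the Raychaudhuri equation for the expansion of the null normal congruence of $\Mm(A)$ whose generators run along $\eta$, the required nonzero shear or null curvature somewhere on $\eta$ drives this expansion strictly negative downstream and then to a focal point of $\Mm(A)$ at finite affine parameter. The technical heart is to show that this focal point occurs strictly before $q$: here one plays minimality of $\Mm(A)$ on $\Sigma$ against the maximin (maximization) property of $\Sigma$ to pin the initial expansion and the affine length of $\eta$, exactly as in the corresponding spacelike-ness argument of \cite{Wall:2012uf}, using Corollary~\ref{cor:bend} to control the caustic/non-$C^1$ loci and the earlier lemma identifying $\partial D(\gamma)=L^{+}\cup L^{-}\cup\gamma$ to treat the subcase in which $p$ or $q$ lies on $L^{\pm}$ (so that $\eta$ is a boundary generator and the relevant congruence is the $k$- or $l$-congruence, which is conjugate-point free there by that lemma). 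Once the focal point lies inside $\eta$, points of $\eta$ beyond it lie in $I^{+}(\Mm(A))$ while still lying in $\Sigma\supset\Mm(A)$, contradicting achronality of $\Sigma$. For $p,q$ both interior this is just the argument of \cite{Wall:2012uf}; the new content is the boundary subcase, where the contradiction comes from confronting conjugate-point-freeness of the generator with the forced focusing.

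The step I expect to be the main obstacle is the one just described: forcing the generic term to act \emph{inside} the segment $\eta$. The difficulty is intrinsic — a null segment has vanishing area and so cannot be beaten by deforming within an achronal slice, and indeed without the generic condition restricted maximin surfaces genuinely can contain null segments, as the flat-space example of Section~\ref{sec:counter-flat} shows — so the argument cannot be a one-line cut-and-paste and must carefully balance the minimization and maximization steps, all while handling the places where $\Mm(A)$ or $\Sigma$ meets $\partial D(\gamma)$ and fails to be smooth via Corollary~\ref{cor:bend}.
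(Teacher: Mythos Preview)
Your two-step strategy inverts the paper's and both halves have genuine gaps. The cut-and-paste reduction to $\eta\subset\Mm(A)$ is only clean in $2{+}1$ dimensions: when $\Mm(A)$ has codimension two in a higher-dimensional bulk, $p$ and $q$ are isolated points on a surface, there is no arc of $\Mm(A)$ to excise, and ``a thin collar following $\eta$'' is not obviously a competitor in $\Sigma$ homologous to $A$. More seriously, once you have reduced to $\eta\subset\Mm(A)$ the surface is null along $\eta$, so its normal bundle is degenerate and the standard focal-point theory for null geodesics orthogonal to a spacelike submanifold does not straightforwardly apply; you yourself flag the location of the focal point as the main obstacle and do not resolve it. The paper excludes $\eta\subset\Mm(A)$ by a much shorter route with no focusing at all: if $\Mm(A)$ is null at some point, deform $\Sigma$ nearby to be spacelike; stability places the new minimal surface close to $\Mm(A)$ but now spacelike there, hence of strictly larger area at first order in the deformation, contradicting the maximization over Cauchy slices. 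This works verbatim when the null point lies on $L^\pm$, deforming into the interior of $D(\gamma)$.

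For the remaining case --- $p,q$ null separated with $\eta\not\subset\Mm(A)$, say $p$ to the future of $q$ along a generator of $L^+$ --- the paper again does not invoke focal points but runs a sign chase on the extrinsic curvature at the single point $p$. Since $\Sigma$ must contain the entire generator between $\gamma$ and $p$, it is smooth along $k$ at $q$, so minimality on $\Sigma$ gives $\theta_k(q)=0$. Comparing the $k$-congruences launched from $\Mm(A)$ near $q$ and near $p$ via Corollary~\ref{cor:bend}, the generic condition forces $\theta_k(p)<0$, i.e.\ $K^A_i(-k^i)>0$. Minimality in the remaining inward $\Sigma$-direction $v$ at $p$ gives $K^A_iv^i\ge0$. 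But any past-pointing normal to $\Mm(A)$ at $p$ is a positive combination of $-k$ and $v$, so pushing $\Sigma$ to the past there strictly increases area --- contradicting maximality. This pointwise balance of the min and max conditions is the mechanism you were gesturing at, but it is not a focal-point estimate along $\eta$, and it is what makes the boundary case go through.
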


\begin{proof}

Since $p,q \in \Mm(A)$, the points lie in a common Cauchy surface.  This means that they cannot be connected by any timelike curve.  So to establish the desired result, we need only exclude the possibility that they might be connected by a null curve.
The argument proceeds in two steps:  (a) We first show that $p,q$ cannot be connected by a null curve $\eta$ lying in $\Mm(A)$.  (b)  We then show that $p, q$ cannot be connected by any null curve $\eta$.

(a) As argued in \cite{Wall:2012uf}, (unrestricted) maximin surfaces cannot become null at any point $y$.  This can be seen by realizing that, if they were null at some $y$, then the Cauchy slice containing the surface could be deformed slightly in a neighborhood of $y$ to be everywhere spacelike.  Stability of maximin surfaces then requires the minimal surface in the new Cauchy surface be nearby.  The fact that a spacelike piece of the new surface is obtained by deforming a null and nearly null piece of the maximin surface means that its area is greater than that of the maximin surface at first order in the deformation.  But this contradicts the fact that the maximin surface is maximal with respect to variations in the Cauchy slice. Thus there can be no point $y$ at which the unrestricted maximin surface becomes null, and the surface can contain no null curves.

This argument depends only on variations in a neighborhood of $y$, and thus carries over unchanged to our restricted maximin construction if any portion of the null curve $\eta \in \Mm(A)$ lies in the interior of $D(\gamma)$.  But in fact the argument also holds when the case where  $\eta$ lies in $\partial D(\gamma)$, as the Cauchy surface can still be deformed to a spacelike slice using a diffeomorphism generated by a vector field that points into the interior on  $\partial D(\gamma)$.

(b) We now extend this result to null curves $\eta$ that are not contained in $\Mm(A)$.  We argue by contradiction, assuming that $\eta$  exists and, without loss of generality, taking $p$ to lie to the future of $q$.  We consider in detail the case where $p$ and $q$ lie in $L^+$, so any null curve connecting them must be a null generator of $L^+$. Other cases follow similarly, or by arguments that are even more nearly identical to those of Thm 14 of \cite{Wall:2012uf}. As in \cite{Wall:2012uf}, the proof is unchanged if we take $p$ and $q$ to be contained in distinct members of a pair of joint restricted-maximin surfaces $Y_A, Y_B$.  The latter case (which we list a separate corollary below) will be critical in the proof of Claim \ref{claim:ewn} below. To make it clear that the argument also applies in that context, in the remainder of this proof we will write $q\in Y_B$ and $p\in Y_A$, where $Y_A$ and $Y_B$ can be either different joint restricted-maximin surfaces or just different subsets of $\Mm(A)$.

Since $p \in L^+$, it lies on some null generator of $L^+$.  Because $\Sigma$ is an achronal surface containing $p$ and $\gamma$, $\Sigma$ must also contain the part of that null generator lying to the past of $p$.  As in figure \ref{fig:no_null}, we can compare the future-directed ingoing null congruence $N_B$ orthogonal to $Y_B$ with the future-directed ingoing null congruence orthogonal to  $Y_A$. Note that $N_B$ must contain $p$ since $N_B\subset D(\gamma)$ and the only such future-directed  null geodesic at $q \in L^+$ which remains in $D(\gamma)$ is the generator of $L^+$ that connects $q$ with $p$.

Now, since $Y_A$ and $Y_B$ lie on the same achronal slice, $Y_B$ cannot enter the chronological past of $Y_A$.  Similarly, the part of $N_B$ to the future of $Y_B$ cannot enter the chronological past of $Y_A$, and must in fact be nowhere to the past of $N_A$ near $p$.  We may thus apply Corollary \ref{cor:bend} from Section \ref{sec:prelim}, so that $\theta_k(B,p)\ge\theta_k(A,p)$. The generic condition then implies that either $\theta_k(B,q)>0$ or $\theta_k(A,p)<0$.

To proceed, we introduce the extrinsic curvature one-form $\mathbf{tr(K^B)} \equiv K^B_i dx^i$ of $Y_B$. Since $\Sigma$ contains the entire null generator of $L^+$ to the past of $p$ (down to, but not including $\gamma$), it must be
smooth along $k$ at $q$. Using the fact that  $Y_B$ is minimal on $\Sigma$ then yields $K^B_ik^i=0$ at $q$, and thus $\theta_k(B,q)=0$.

Our dichotomy above then requires $\theta_k(A,p)<0$, and thus that $\theta_{-k}(A,p)>0$.  The latter refers to the past-directed null congruence. We may also say that $-k$ points outward (towards $Y_B$) along $\Sigma$ from $Y_A$.  But $Y_A$ will also have an inward pointing normal in $\Sigma$, which we denote $v$. (If $\Sigma$ is only piecewise smooth then $v$ can differ from $k$.) Since $Y_A$ is minimal on $\Sigma$, we have $K^A_i v^i \geq 0$. Note the the codimension 2 extrinsic curvature $K^A_i$ of $\Mm(A)$ is well-defined\footnote{If $\Mm(A)$ is only piece-wise smooth, one may consider a family of smooth approximating surfaces.} even if $\Sigma$ fails to be smooth at $\Mm(A)$. Since the above null expansion requires $-K^A_i k^i > 0$, the vectors  $-k^i$ and $v^i$ cannot be diametrically opposed so, to remain tangent to the achronal surface $\Sigma$, $v^i$ must point in a spacelike or null past-inward direction; see again figure \ref{fig:no_null}.

We can now consider deforming $\Sigma$ to the past infinitesimally at $p$. Since the area of $Y_A$ (or a positively weighted sum involving the area of $Y_A$) is maximal with respect to variations of $\Sigma$, there must be some vector $u^i$ at $p$ that is orthogonal to $Y_A$, points to the past of $\Sigma$, and which has $K^A_i u^i \leq 0$. But since $Y_A$ has spacetime codimension two, this $u^i$ must be a linear combination of $v^i$ and $-k^i$. Furthermore, since $u^i$ points to the past of $\Sigma$, it must be a  {\it positive} linear combination of these vectors (see again figure \ref{fig:no_null}).  But this implies $K^A_iu^i >0$,  contradicting the conclusion above.

By contrast, note that  if $q\in \gamma$ then $\theta_k(B,q)$ need not be strictly zero (as minimality on $\Sigma$ only requires $\theta_k(B,q)\geq 0$). And if $\theta_k(B,q) > 0$, then $\theta_k(A,p)$ may also be positive. Thus, we cannot rule out possible null separations between $p$ and $q$ if $q\in \gamma$ (as consistent with the statement of the claim).

\begin{figure}
    \centering
    \includegraphics[width = 0.7\linewidth]{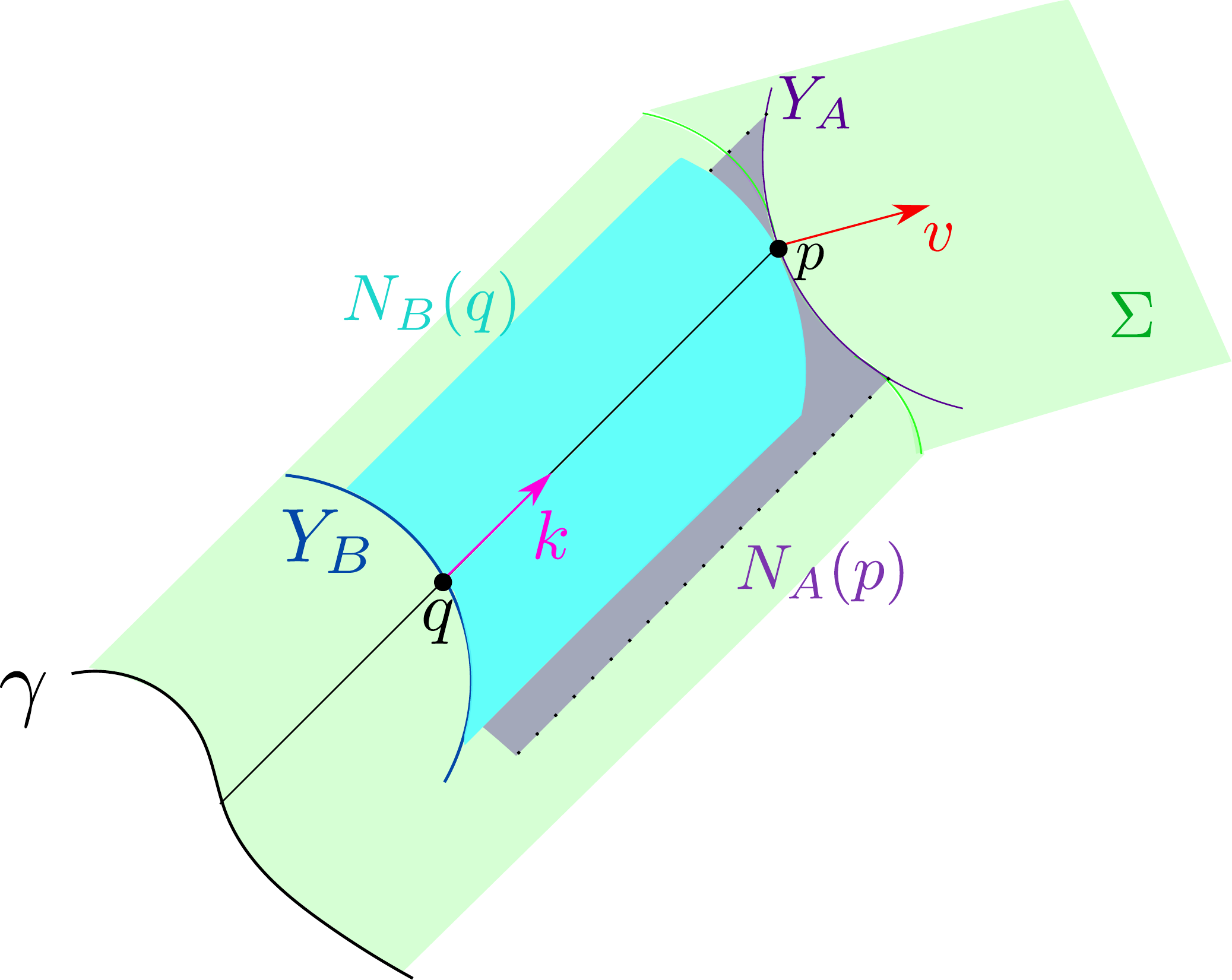}
    \caption{Consider two surfaces $Y_A$ and $Y_B$ as in Claim \ref{claim:spacelike} which are constructed on the same slice $\Sigma$. Points $q$ on $Y_B$ and $p$ on $Y_A$ cannot have null separated points. $N_B(q)$ is the null congruence orthogonal to $Y_B$ at $q$ and $N_A(p)$ is the null congruence orthogonal to $Y_A$ at $p$. This null congruence must bend towards the orthogonal vector $v$, because $Y_A$ must stay contained in $\Sigma$ (and so remain spacelike separated to $Y_B$). This places constraints of the relative signs of the expansions of the null congruences, which can in turn be used to contradict the maximality of the areas $Y_A$ and $Y_B$ along $\Sigma$.}
    \label{fig:no_null}
\end{figure}

\end{proof}

\begin{cor}
\label{cor1}
Consider the joint restricted-maximin surfaces $Y_A,Y_B$ defined by choosing $c,d>0$ and performing restricted maximin using quantity $Z = c \text{Area(}X_{A})+d \text{Area(}X_{B})$.  Here we require that $X_{A}$, $X_B $ are codimension 2 surfaces anchored respectively to $\partial A$ and  $\partial B$ and satisfying associated homology constraints on each Cauchy slice. As described above, this results in a pair of surfaces $Y_A,Y_B$ that are minimal on the same slice $\Sigma$.  Let $p \in Y_A$ and $q \in Y_B$. If neither $p$ nor $q$ are in $\gamma$, then either $p=q$ or $p, q$ are spacelike separated.
\end{cor}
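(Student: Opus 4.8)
The plan is to observe that the proof of Claim \ref{claim:spacelike} was deliberately written so as to cover exactly this situation, so the corollary follows by repeating that argument with $\Mm(A),\Mm(B)$ replaced by $Y_A,Y_B$. The structural fact that makes this work is that, by construction of the joint restricted-maximin prescription, $Y_A$ and $Y_B$ are both minimal on a common Cauchy slice $\Sigma \in C_\gamma$; hence $p$ and $q$ already lie on a common achronal surface and cannot be timelike related. As in Claim \ref{claim:spacelike}, the only content is then to exclude a null separation, and we may assume $p \neq q$ since $p=q$ is explicitly permitted.

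First I would dispose of a null curve joining $p$ and $q$ that lies inside $Y_A \cup Y_B$. The part (a) argument of Claim \ref{claim:spacelike}, that a maximin surface cannot become null, used only a local deformation of $\Sigma$ near the putative null point together with the stability property; since we have assumed $Y_A,Y_B$ are stable in the same sense as $\Mm(A)$, the same reasoning applies verbatim, including the case where the null curve lies in $\partial D(\gamma)$ (handled by a diffeomorphism pushing into the interior).

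Next, for a null curve $\eta$ not contained in these surfaces, I would reproduce part (b) with $q \in Y_B$ to the past of $p \in Y_A$, treating in detail the representative case $p,q \in L^+$ (the interior and mixed cases being handled as in Thm 14 of \cite{Wall:2012uf}). The steps: (i) since $Y_A,Y_B$ are minimal on the same $\Sigma$, $Y_B$ --- and the portion of the ingoing orthogonal congruence $N_B$ to its future, which must contain $p$ --- is nowhere in the chronological past of $Y_A$, nor to the past of $N_A$ near $p$, so Corollary \ref{cor:bend} gives $\theta_k(B,p) \geq \theta_k(A,p)$; (ii) the generic condition forces $\theta_k(B,q) > 0$ or $\theta_k(A,p) < 0$; (iii) $\Sigma$ contains the whole generator of $L^+$ to the past of $p$ and is therefore smooth along $k$ at $q$, so minimality of $Y_B$ on $\Sigma$ gives $K^B_i k^i = 0$, i.e. $\theta_k(B,q) = 0$, leaving $\theta_k(A,p) < 0$ and hence $\theta_{-k}(A,p) > 0$; (iv) minimality of $Y_A$ on $\Sigma$ with an inward normal $v$ in $\Sigma$ gives $K^A_i v^i \geq 0$, and since $-K^A_i k^i > 0$ the vector $v$ must point in a spacelike or null past-inward direction; (v) maximality of $Z = c\,\A(X_A) + d\,\A(X_B)$ under a past deformation of $\Sigma$ localized near $p$ produces a past-pointing $u$ orthogonal to $Y_A$ with $K^A_i u^i \leq 0$; but $u$ is a positive linear combination of $v$ and $-k$, so $K^A_i u^i > 0$, a contradiction.

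The one point needing a word beyond Claim \ref{claim:spacelike} is step (v): a deformation of $\Sigma$ supported in a small neighborhood of $p$ disjoint from $Y_B$ does not affect $\min(Z,\Sigma)_B = \min(B,\Sigma)$ to first order --- if it touched $Y_B$ its area would change only at second order by minimality --- so the sign of the first-order change of $Z$ is governed entirely by $c\,\A(X_A)$, and the maximin maximality of $Z$ yields exactly the inequality $K^A_i u^i \leq 0$ used above. With this the argument closes as before, and I expect no genuinely new obstacle: the corollary is essentially a restatement of what the proof of Claim \ref{claim:spacelike} already established.
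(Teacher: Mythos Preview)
Your proposal is correct and matches the paper's approach exactly: the paper's proof of Corollary \ref{cor1} is simply the statement that the argument is identical to that for Claim \ref{claim:spacelike}, whose proof was already written in terms of $Y_A,Y_B$ precisely so that it would cover this case. Your extra remark about step (v) --- that a local deformation of $\Sigma$ near $p$ affects only the $c\,\A(X_A)$ term of $Z$ to first order --- is a helpful elaboration of what the paper encapsulates in the parenthetical ``(or a positively weighted sum involving the area of $Y_A$)'' in the Claim \ref{claim:spacelike} proof.
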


The argument is identical to that for Claim 1.  This corollary will be useful below in much the same way as in \cite{Wall:2012uf}.

\begin{claim}
\label{claim:ext}
Suppose $p\in\Mm(A)$ with $p\notin\partial D(\gamma)$. Then, $\Mm(A)$ is extremal at $p$. If $p\in L^\pm$, then $\theta(p) \leq 0$ for any null congruence orthogonal to $\Mm(A)$ that is directed into the interior of $D(\gamma)$.  Furthermore, $\theta_k=0$ for $p\in L^+$ and $\theta_l =0$ for $p \in L^-$.  As a result, at any point $p \in \Mm(A)$ with $p \not \in \gamma$, null congruences orthogonal to $\Mm(A)$ that do not immediately leave $D(\gamma)$ have non-positive expansion.
\end{claim}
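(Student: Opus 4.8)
The plan is to follow the strategy of \cite{Wall:2012uf}, treating separately the cases $p$ in the interior of $D(\gamma)$, $p\in L^+$, and $p\in L^-$, with the last identical to the second under $k\leftrightarrow -l$. In every case the two levers are: (i) minimality of $\Mm(A)=\mathrm{min}(A,\Sigma)$ within the maximizing Cauchy slice $\Sigma$, which controls the component of the mean curvature vector $\vec H$ of $\Mm(A)$ tangent to $\Sigma$; and (ii) maximality of $\A(\Mm(A))$ over $\Sigma\in\mathcal C_\gamma$, which through the first variation of $\A(\mathrm{min}(A,\Sigma))$ under deformations of $\Sigma$ controls the remaining, transverse-to-$\Sigma$ component of $\vec H$. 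Since $p\notin\gamma\supseteq\partial A$, the point $p$ is not an anchor of $\Mm(A)$, so all deformations below may be taken supported in an arbitrarily small neighborhood of $p$ without spoiling the anchoring constraint or, for small enough deformations, the homology constraint; I also use the assumed stability of $\Mm(A)$ throughout, so that $\mathrm{min}(A,\Sigma')$ stays close to $\Mm(A)$ for $\Sigma'$ near $\Sigma$.

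\textbf{Interior case.} Here $\Sigma$ may be taken smooth near $p$. Minimality within $\Sigma$ forces the part of $\vec H$ tangent to $\Sigma$ to vanish, so $\vec H$ is purely the timelike, normal-to-$\Sigma$ component. Since $p$ is interior, $\Sigma$ can be deformed near $p$ both to the future and to the past; the first-order change of $\A(\mathrm{min}(A,\Sigma))$ is $-\int_{\Mm(A)}\langle \vec H,\vec\xi\rangle$ and, as only the normal-to-$\Sigma$ part of $\vec\xi$ contributes at this order, it is strictly positive for one of the two deformation directions unless $\vec H=0$. Maximality forbids an increase, so $\vec H=0$ and $\Mm(A)$ is extremal at $p$; in particular every orthogonal null expansion vanishes and so is $\le 0$. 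This is just the maximin extremality argument of \cite{Wall:2012uf}, carried over to the interior of $D(\gamma)$ unchanged.

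\textbf{Case $p\in L^+$.} The geometric input is that the null generator $\lambda$ of $L^+$ through $p$ lies in $\Sigma$ down to (but not including) $\gamma$: an achronal slice containing both $p$ and $\gamma$ cannot do otherwise, by the causal-structure argument behind the characterization of $\partial D(\gamma)$. Hence $k$, the tangent to $\lambda$, lies in $T_p\Sigma$; and because the induced metric on $T_p\Sigma$ is degenerate precisely along $k$ while $\Mm(A)$ is spacelike at $p$ (by Claim~\ref{claim:spacelike} it contains no null curve, so $k\notin T_p\Mm(A)$), $k$ spans the one-dimensional normal space of $\Mm(A)$ within $\Sigma$; in particular $k$ is orthogonal to $\Mm(A)$ at $p$. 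Minimality of $\Mm(A)$ within $\Sigma$ under deformations in this unique transverse direction then forces the $k$-component of $\vec H$ to vanish, i.e. $\theta_k(p)=0$, and dually $\theta_l(p)=0$ for $p\in L^-$. For the sign of the remaining expansion, note that from $p\in L^+$ the slice $\Sigma$ may be deformed into the interior of $D(\gamma)$ (past-inward directions, e.g. along $-l$) but not into the future-outward direction along $l$; maximality gives a non-positive first-order area change for the permitted inward deformations, which together with $\theta_k(p)=0$ yields $\theta_{-l}(p)\le 0$ and, more generally, $\theta(p)\le 0$ for every orthogonal null congruence pointing into the interior. Assembling: at $p\in L^+$ the orthogonal null directions are $\pm k$ and $\pm l$, with $\theta_{\pm k}(p)=0$ and $\theta_{-l}(p)\le 0$, while $l$ --- the only direction with an uncontrolled expansion --- immediately exits $D(\gamma)$. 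Combining with the interior case and the mirror statements for $L^-$, every orthogonal null congruence at a point $p\in\Mm(A)\setminus\gamma$ that does not immediately leave $D(\gamma)$ has $\theta(p)\le 0$, as claimed. The analogue for joint restricted-maximin surfaces (Corollary~\ref{cor2}) follows verbatim, using Corollary~\ref{cor1} in place of Claim~\ref{claim:spacelike}.

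\textbf{Main obstacle.} The delicate point is the local analysis at $p\in L^+$, where $\Sigma$ is null: one must justify the first-variation computation for ``minimal within $\Sigma$'' at a point where the induced metric degenerates (so the tangential-to-$\Sigma$ part of $\vec H$ need no longer vanish), carefully delineate the cone of permissible $\Sigma$-deformations there, and allow for $\Mm(A)$ being merely tangent to (not locally contained in) $L^+$ as well as for $p$ lying on a caustic or nonlocal self-intersection of $L^+$. For the last of these I would fall back on Corollary~\ref{cor:bend} and approximation by smooth congruences, exactly as anticipated in its statement, and I would lean on the stated piecewise-smoothness and stability assumptions to make the deformation arguments run. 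I do not expect the generic or null-curvature conditions to be needed for this claim --- only the causal structure of $D(\gamma)$ together with the minimality/maximality defining $\Mm(A)$.
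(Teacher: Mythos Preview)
Your approach is essentially the paper's: split into interior versus $L^\pm$, use minimality-on-$\Sigma$ for the tangential component of the mean curvature and maximality-over-$\Sigma$ for the transverse component, and invoke Claim~\ref{claim:spacelike} to ensure $k\notin T_p\Mm(A)$. The interior case and the $\theta_{-l}\le 0$ argument match the paper almost verbatim.

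There is, however, one step the paper carries out that your proposal only flags. You write that ``$k$ spans the one-dimensional normal space of $\Mm(A)$ within $\Sigma$'' and then conclude $\theta_k(p)=0$ from minimality. But the generator argument only puts the \emph{past} half-line $-k$ into the one-sided tangent cone of $\Sigma$ at $p$; if $\Sigma$ has a kink there, the inward normal $v$ on the other side need not be $+k$, and minimality yields only the pair of inequalities $K^A_i(-k^i)\ge 0$ and $K^A_iv^i\ge 0$, i.e.\ $\theta_k\le 0$ rather than $\theta_k=0$. The paper closes this gap with a short additional argument: if $\Sigma$ were not smooth at $p$, those two one-sided minimality inequalities would force every past-directed deformation $u=\alpha(-k)+\beta v$ (with $\alpha,\beta>0$) to satisfy $K^A_iu^i>0$, i.e.\ the area would \emph{increase} under deformation of $\Sigma$ to the past, contradicting maximality. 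This is exactly the mechanism from the proof of Claim~\ref{claim:spacelike}, reused to force smoothness of $\Sigma$ at $p$ and hence to upgrade $\theta_k\le 0$ to $\theta_k=0$. You correctly identify this region of the argument as the ``main obstacle,'' but the resolution is this specific maximality-vs-kink step rather than an appeal to Corollary~\ref{cor:bend} or approximation.
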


\begin{proof}
The analogous result for unrestricted maximin surfaces was proven in \cite{Wall:2012uf}. Since the arguments are local at $p$, they also establish Claim \ref{claim:ext} for $p \notin \partial D(\gamma)$. By similar reasoning, even if $p\in L^+$  minimality of $\Mm(A)$ on $\Sigma$ implies that $\Mm(A)$ is extremal along directions within $\Sigma$. 

Note that the arguments of Claim \ref{claim:spacelike} above show that $\Sigma$ must be smooth at $p$. Otherwise, minimality along the two tangent vectors of $\Sigma$, $-k$ and $v$, would imply that the area of $\Mm(A)$ could be increased by varying $\Sigma$ to the past at $p$, contradicting maximality of $\Mm(A)$. 

Since $\Sigma$ is smooth at $p$, it must be tangent to the $k$ direction at $p$. Minimality of $\Mm(A)$ on $\Sigma$ then gives $K^A_ik^i=0$, and hence $\theta_k=0$.

We can now consider the null expansion in the $l$ direction. In the maximization step of the maximin procedure, $\Sigma$ can be freely varied in the $l$ direction into the interior of $\partial D(\gamma)$ at points near $p$. Maximality over these variations implies $K_i^A l^i\leq 0$, which in turn implies $\theta_l \leq 0$. Of course, similar arguments hold for $p\in L^-.$
\end{proof}

Again, the argument generalizes to yield analogous results for the joint restricted-maximin surfaces $Y_A,Y_B$ defined as in Corollary \ref{cor1} by performing a restricted maximin procedure on  $Z = c \text{Area(}X_{A})+d \text{Area(}X_{B})$.  However, this time the generalization requires a bit more work as explained below:
\begin{cor}
\label{cor2}
Let $Y_A,Y_B$ be defined as in Corollary \ref{cor1} for nested regions $B\supset A$, and consider $p\in Y_A \cup Y_B$ with $p\notin\partial D(\gamma)$. Then, $Y_A \cup Y_B$ is locally an extremal surface at $p$.  In particular, if $Y_A$ and $Y_B$ coincide at $p$ then they must do so in a neighborhood of $p$.
Furthermore,  if $p\in L^\pm$, then $\theta(p) \leq 0$ for any null congruence emanating from $Y_A \cup Y_B$ that is directed into the interior of $D(\gamma)$.  We also have $\theta_k=0$ for $p\in L^+$ and $\theta_l =0$ for $p \in L^-$.  As a result, at any point $p \in \Mm(A)$ with $p \not \in \gamma$, null congruences orthogonal to $Y_A$ or $Y_B$ that do not immediately leave $D(\gamma)$ have non-positive expansion.
\end{cor}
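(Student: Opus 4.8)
The plan is to follow the proof of Claim~\ref{claim:ext} essentially verbatim for whichever of $Y_A,Y_B$ passes through $p$, organizing the argument into cases according to (i) whether $p$ lies in the interior of $D(\gamma)$ or on $L^\pm$, and (ii) whether $p$ lies on only one of $Y_A,Y_B$ or on both. The only genuinely new ingredient relative to the single-surface case is that when $p\in Y_A\cap Y_B$ a variation of the Cauchy slice $\Sigma$ supported near $p$ can change $\A(Y_A)$ and $\A(Y_B)$ simultaneously, so the maximization step constrains only the combination $Z=c\,\A(Y_A)+d\,\A(Y_B)$ and not either area on its own. Everything else should reduce to Claim~\ref{claim:ext}.

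For $p$ in the interior of $D(\gamma)$ the arguments of \cite{Wall:2012uf} are purely local and carry over unchanged, exactly as in the interior part of Claim~\ref{claim:ext}. Next consider $p\in L^+$ lying on, say, $Y_A$, but with $Y_A\ne Y_B$ throughout some neighborhood of $p$ (in particular $p\notin Y_B$). Since $Y_B$ is a closed piecewise-smooth surface not containing $p$, there is a neighborhood $U$ of $p$ disjoint from $Y_B$, and I would take all of the slice deformations used in Claim~\ref{claim:ext} to be supported in $U$. By the stability assumption, $Y_B$ remains minimal on every deformed slice $\Sigma'$, since $\Sigma'$ agrees with $\Sigma$ on a neighborhood of $Y_B$; hence $\A(\min(B,\Sigma'))=\A(Y_B)$ and maximality of $Z$ at $\Sigma$ is equivalent to maximality of $\A(Y_A)$ over precisely the family of deformations used in Claim~\ref{claim:ext}. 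In particular the argument behind Claim~\ref{claim:spacelike} still forces $\Sigma$ to be smooth at $p$ (a kink would let one raise $\A(Y_A)$, hence $Z$, by pushing $\Sigma$ to the past at $p$), so $\Sigma$ is tangent to $k$ at $p$, minimality gives $\theta_k(A,p)=0$, and deforming $\Sigma$ in the $l$ direction into the interior of $\partial D(\gamma)$ gives $\theta_l(A,p)\le 0$; the case $p\in L^-$ and the case with $A$ and $B$ exchanged are identical.

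The real work is the remaining case $p\in Y_A\cap Y_B$, where I want to show that $Y_A$ and $Y_B$ actually coincide on a neighborhood of $p$. Here nestedness $A\subset B$ enters through a cut-and-paste argument on the common slice $\Sigma$: writing $R_A,R_B$ for the homology regions of $Y_A,Y_B$ on $\Sigma$, one has $\partial(R_A\cap R_B)\sim A$ and $\partial(R_A\cup R_B)\sim B$, and submodularity of the induced area together with minimality of $Y_A,Y_B$ on $\Sigma$ forces both new surfaces to be minimal as well and forbids $Y_A,Y_B$ from crossing. Since $Y_A,Y_B$ are then extremal within $\Sigma$ and touch at $p$ with one lying weakly on one side of the other, a maximum-principle-type argument gives $Y_A=Y_B$ near $p$ in the interior case; on $L^\pm$ one also invokes Corollary~\ref{cor1} (so that distinct points of the two surfaces are spacelike rather than null separated) and the comparison of the null congruences orthogonal to $Y_A$ and $Y_B$ along the generators of $L^\pm$ through $p$ via Corollary~\ref{cor:bend}. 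Once coincidence on a neighborhood of $p$ is known, near $p$ one simply has $Z=(c+d)\,\A(Y_A)$ for the common surface, and the variational argument of Claim~\ref{claim:ext} applies verbatim to give extremality at $p$, non-positive expansion for every orthogonal congruence directed into the interior when $p\in L^\pm$, and $\theta_k=0$ on $L^+$, $\theta_l=0$ on $L^-$.

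Assembling the cases yields the first three assertions, and the closing ``as a result'' statement is then immediate: for $p\notin\gamma$, any congruence orthogonal to $Y_A$ or $Y_B$ that does not immediately leave $D(\gamma)$ is covered by the interior case or by the into-the-interior part of the $L^\pm$ case, and in all of these $\theta(p)\le 0$. I expect the coincidence step for $p\in L^\pm$ to be the main obstacle: one must run the non-crossing and maximum-principle reasoning on a slice $\Sigma$ that may be null near $p$ and for surfaces that are only piecewise smooth there, keeping track of caustics and of the relevant null-congruence comparisons. This is precisely why the corollary ``requires a bit more work'' than Claim~\ref{claim:ext}, whereas the interior version of the argument is routine.
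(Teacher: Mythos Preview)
Your proposal is correct and follows the same case-by-case strategy as the paper's proof: interior points reduce to Claim~\ref{claim:ext}, points on $L^\pm$ lying on only one surface are handled by localizing the slice variation away from the other surface, and the coincidence case is dealt with by a non-crossing plus maximum-principle argument using nesting and Corollary~\ref{cor:bend}. The one place where the executions differ slightly is the ``$p\in L^+$ on only one surface'' case: you isolate the other surface simply by working in a small neighborhood of $p$ disjoint from it and invoking stability, whereas the paper instead uses Corollary~\ref{cor1} together with $A\subset B$ to determine exactly where the other surface can meet the null generator of $L^+$ through $p$---a more geometric route to the same conclusion that variations near $p$ leave the other area unchanged.
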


\begin{proof}
For $p\notin \partial D_\gamma$, the key fact is that  minimal surfaces on the same slice $\Sigma'$ cannot touch, cross, or coincide, except in the case where
an entire connected component of $\min(A,\Sigma')$ coincides with an entire connected component of $\min(B,\Sigma')$.  At a point $p$ where the surfaces coincide, maximizing one area under local variations of the Cauchy surface also clearly maximizes the other\footnote{The fact that variations of $\Sigma$ may cause $Y_A, Y_B$ to separate from each other can affect the areas only at second order since both surfaces are minimal on the original Cauchy slice, and since we now consider only interior points where such minimal surfaces must be smooth.}. So extremality of $Y_A$ and $Y_B$ at such points follows from the same argument as for $\Mm(A)$.   Furthermore,
Corollary \ref{cor1} states that (except on full connected components where $Y_A$,  $Y_B$ coincide),
points on $Y_A$ are spacelike separated from points on $Y_B$. On portions where they do not
coincide, the stability requirement implies that we may then freely vary the Cauchy slice in the neighborhood of one surface without affecting the
area of the other.  So again extremality of $Y_A,Y_B$ follows in precisely the same way as for $\Mm(A)$.

It remains only to consider $p\in L^\pm$.  We focus on the case $p \in L^+$, with the understanding that the time-reversed remarks hold for $L^-$.  There are three cases to consider below.

\indent \textit{Case 1}: Suppose $p\in L^+$ lies in $Y_A$ but not in $Y_B$.
Recall that Corollary \ref{cor1} states that no two points on these surfaces can be null separated if neither one is on $\gamma$. (They cannot be separated by a null ray that runs through the bulk, nor by a null generator contained in $\partial D(\gamma)$). Because minimal surfaces on the same slice cannot cross, if $Y_A$ has a point on some generator $k$ of $L^+$, then $Y_B$ must have a point $q\in \gamma$ along the same generator.   Since $q \in \gamma$, it  lies in all Cauchy surfaces and changes in $Z$ resulting from local variations of the Cauchy surface near the generator $k$ receive no contributions from $Y_B$.  Applying the arguments of Claim \ref{claim:ext} to such variations then establish the desired results for $Y_A$ (and the results for $Y_B$ hold vacuously).

\indent \textit{Case 2}: Suppose $p\in L^+$ lies in $Y_B$ but not in $Y_A$.
Then we may again use Corollary \ref{cor1} to show that if $Y_B$ has a point in $L^+$ on some generator $k$ then $Y_A$ cannot intersect that generator at all.  As a result, one is free to vary the Cauchy surface near $k$ without affecting the area of $Y_A$.  Applying the arguments of Claim \ref{claim:ext} to such variations then proves the desired results for $Y_B$ (and the results for $Y_A$ hold vacuously).

\indent \textit{Case 3}: Finally, suppose that $Y_A$ and $Y_B$ coincide at $p\in L^+$.
Below, we will show that $Y_A$ and $Y_B$ cannot just touch at such a point $p$, but must in fact must coincide on finite regions whose boundaries must lie in $\gamma$.  
Applying the arguments of Claim \ref{claim:ext} then shows
that a weighted sum of the $A,B$ null expansions at $p$ will satisfy the claim.  But since both surfaces define the same null congruences and null expansions near $p$, the same must be true for the expansions of the individual null congruences orthogonal to $Y_A$ and $Y_B$.

The argument for coincidence of $Y_A$ and $Y_B$ on the above finite region will roughly follow the proof of Claim \ref{claim:spacelike}, though now $p$ and $q$ coincide. As above, we can compare the $k$ directed null congruence $N_B(p)$ from $Y_B$ at $p$ to the $k$ directed null congruence $N_A(p)$ from  $Y_A$ at $p$. As in Claim \ref{claim:spacelike}, because $Y_A$ and $Y_B$ lie on the same achronal slice, because minimal surfaces on the same slice cannot cross, and because $A\subset B$, we must have  $N_A$ nowhere to the past of $N_B$. Corollary \ref{cor:bend} then requires $\theta_k(B,p) \geq \theta_k(A,p)$.

We further consider the extrinsic curvature $\mathbf{tr(K^B)} \equiv K^B_i dx^i$ of $Y_B$ and the analogous $\mathbf{tr(K^A)} \equiv K^A_i dx^i$. Note that $\Sigma$ must contain the full generator of $L^+$ to the past of $p$. Because $Y_A$ and $Y_B$ are minimal on $\Sigma$, we have $K^B_i (-k^i) \geq 0$ and $K^A_i(-k^i)\geq 0$, where $-k^i$ is the past-directed tangent vector to $\Sigma$ which points along the generators of  $L^+$ at $p$. Thus, $\theta_k(B,p) \leq 0$ and $\theta_k(A,p) \leq 0$.

Combining the two preceding paragraphs then requires  either $\theta_k(A,p)=\theta_k(B,p)=0$ or $\theta_k(A,p)<0$.  In the latter case, we can continue to follow the arguments of Claim \ref{claim:spacelike}. $\Sigma$ will possess some other tangent vector normal to $Y_A$, which points to the interior. We call this vector $v_i$. Minimality on $\Sigma$ implies $K^A_i v^i \leq 0$. However, because $\theta_k(A,p)<0$, $K^A_i (-k)^i > 0$, Thus, $v^i$ cannot point along the $k$ direction, and $v^i$ must point in a spacelike or null past direction into the interior of $Y_A$. We can now consider pushing $\Sigma$ to the past infinitesimally at $p$. Since $Y_A$ and $Y_B$ have maximal area with respect to variations of $\Sigma$, pushing $\Sigma$ to the past, along some vector $u^i$ must decrease the area of $Y_A$ and $Y_B$, such that $K_i u^i \leq 0$. But $u^i$ must be a positive linear combination of $v^i$ and $-k^i$, implying $K_iu^i >0$ and contradicting the previous line.
Thus, $\theta_k(A,p)=\theta_k(B,p)=0$.

We now consider the set ${\cal C}$ on which the surfaces coincide, and in particular the connected component ${\cal C}_0 \subset {\cal C}$ that contains $p$.  Since the surfaces are continuous,  ${\cal C}_0$ must be closed. Thus any point $q$  in the boundary of ${\cal C}_0$ must also lie in ${\cal C}$.  But if $q$ lies in the interior of $D(\gamma)$, then we have $\theta_k(A)=\theta_k(B)=0$ at all points near $q$ (as these also lie in the interior). Lemma \ref{lemma:null} then requires our surfaces to coincide on a larger set, contradicting the fact that $q$ lies at the boundary of ${\cal C}_0$.

Similarly, if $q\in L^+$, then all nearby points are in $L^+$ or the interior.  So by the arguments of the above cases $\theta_k$ for each surface must vanish on a larger set. Thus Lemma \ref{lemma:null} again requires the surfaces to coincide on a larger set and yields a contradiction.  The case
$q
\in L^-$ is similarly forbidden by the time-reversed argument.
The only remaining possibility is that $\partial {\cal C}_0 \subset \gamma$ as claimed.  This completes the proof of Corollary \ref{cor2}. 
\end{proof}


\smallskip

As described in the introduction to this section, the above results are sufficient to guarantee that for $A\subset B$ the joint restricted-maximin surfaces $Y_A,Y_B$ agree with $\Mm(A)$ and $\Mm(B)$.  It then follows that  $\Mm(A)$ and $\Mm(B)$ lie on a common Cauchy surface $\Sigma_0$, on which they are both minimal.  Nesting of the associated entanglement wedges then follows from the fact that minimal surfaces on $\Sigma_0$ cannot cross.  We use the notation $EW(A)$ to denote the entanglement wedge of $A$ and thus write $EW(A) \subset EW(B)$.  As usual, we use bulk causality to covariantly define $EW(A)$.  Thus we have:

\begin{claim}
If $A$ and $B$ are regions with $A\subset B$,  $\Mm(A)$ and $\Mm(B)$ are minimal on a common Cauchy slice $\Sigma \in C_{\gamma}$.  This also implies $\text{EW}(A)\subset\text{EW}(B).$ 
\label{claim:ewn}
\end{claim}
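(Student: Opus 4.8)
The plan is to reduce the claim to the joint restricted-maximin machinery assembled above. First I would invoke the joint construction for the nested pair $A\subset B$: fixing any weights $c,d>0$, one maximizes $Z=c\,\A(X_A)+d\,\A(X_B)$ over slices $\Sigma'\in C_\gamma$; since $c,d>0$, the minimizer of $Z$ on each slice is the pair $\bigl(\min(A,\Sigma'),\min(B,\Sigma')\bigr)$, so the maximization yields surfaces $Y_A,Y_B$ that are simultaneously minimal on a single common Cauchy slice $\Sigma$. By Corollary \ref{cor1}, away from $\gamma$ the surfaces $Y_A$ and $Y_B$ are either coincident on whole connected components or mutually spacelike separated, and by Corollary \ref{cor2} the null congruences orthogonal to $Y_A$ or $Y_B$ that do not immediately leave $D(\gamma)$ have non-positive expansion (with $\theta_k=0$ on $L^+$ and $\theta_l=0$ on $L^-$).

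The core step is to show $Y_A=\Mm(A)$ and $Y_B=\Mm(B)$. I would argue by contradiction for $Y_A$, the argument for $Y_B$ being identical. Since $Y_A$ and $\Mm(A)$ are both of the form $\min(A,\cdot)$, the maximization step of the restricted maximin procedure forces $\A(\Mm(A))\ge\A(Y_A)$; and maximality of $Y_A$ forbids $Y_A\subsetneq\Mm(A)$, so if the two surfaces differ then, using the stability assumption, one can pick a Cauchy slice $\Sigma$ on which $\Mm(A)$ is minimal and on which some portion of $Y_A$ fails to lie. One then passes to the representative $\widetilde{Y_A}(\Sigma)$ of $Y_A$ on $\Sigma$: each of its points is reached from a distinct point of $Y_A$ along a generator of a null congruence orthogonal to $Y_A$ that stays in $D(\gamma)$ and is free of conjugate points and nonlocal self-intersections. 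Combining the non-positive-expansion statement of Corollary \ref{cor2} with the generic condition makes the area decrease strictly along any generator traversed for nonzero affine parameter, while at points of $Y_A\cap\gamma$ — which $\widetilde{Y_A}(\Sigma)$ is built to contain — the generator is traversed for zero affine parameter and contributes nothing. Hence $\A(\widetilde{Y_A}(\Sigma))<\A(Y_A)\le\A(\Mm(A))$ while $\widetilde{Y_A}(\Sigma)$ is homologous to $A$ within $\Sigma$, contradicting minimality of $\Mm(A)$ on $\Sigma$.

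With $Y_A=\Mm(A)$ and $Y_B=\Mm(B)$, both $\Mm(A)$ and $\Mm(B)$ are minimal on the same slice $\Sigma$, which is the first assertion of the claim. For the second, I would observe that two minimal surfaces on a common Cauchy slice cannot cross: by the Corollary \ref{cor1} argument they are spacelike separated or share whole connected components, so (using $A\subset B$ on $\gamma$) the homology region of $A$ on $\Sigma$ is contained in that of $B$ on $\Sigma$. The domains of dependence of these nested spatial regions are then nested, and since $\text{EW}(A)$ and $\text{EW}(B)$ are precisely these covariant domains of dependence (defined via bulk causality), we conclude $\text{EW}(A)\subset\text{EW}(B)$.

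I expect the main obstacle to be the strict area decrease in the representative step, specifically the bookkeeping at points where $Y_A$ or $Y_B$ meets $\gamma$ or lies along $L^\pm$: this is exactly where one cannot simply quote the cutoff-free arguments of \cite{Wall:2012uf} but must use the sharpened null-expansion statements of Claim \ref{claim:ext} and Corollary \ref{cor2} together with the piecewise-smoothness and stability assumptions.
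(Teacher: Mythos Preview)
Your proposal is correct and follows essentially the same route as the paper: the paper's proof of Claim \ref{claim:ewn} explicitly defers to the argument sketched in the introduction to section \ref{sec:results}, which is precisely the joint-maximin-plus-representative argument you outline (invoke $Y_A,Y_B$, use Corollaries \ref{cor1} and \ref{cor2} to control null expansions, compare $Y_A$ to $\Mm(A)$ via its representative on a slice where $\Mm(A)$ is minimal to derive a contradiction, then read off nesting from the non-crossing of minimal surfaces on the common slice). Your identification of the main subtlety---the bookkeeping at $\gamma$ and $L^\pm$ in the representative step---is also exactly what the paper flags, and is handled by the structure of the definition of representative together with Corollary \ref{cor2}.
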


\begin{proof}
The full proof was given in the introduction to this section.  Note that as opposed to the original maximin surfaces of \cite{Wall:2012uf}, for boundary regions $A \subset B$ the intersection $\Mm(A) \cap \Mm(B)$ can be very general.  But as in the proof of Corollary \ref{cor2}, except when $\Mm(A)$ and $\Mm(B)$ coincide on finite regions, non-trivial intersections of $\Mm(A)$ with $\Mm(B)$ are confined to the boundary $\gamma$ and thus lie in all Cauchy slices.
\end{proof}

Having established that the maximin surfaces of nested subregions are minimal on a common Cauchy slice, SSA and monogamy of mutual information follow quickly by adapting the original SSA argument of \cite{Headrick:2007km} and the monogamy argument of \cite{Hayden:2011ag}.  This discussion is identical to that of \cite{Wall:2012uf}, but we repeat it for completeness below.
\begin{cor}
Our restricted maximin construction satisfies strong subadditivity and negativity of tripartite mutual information (aka monogamy of mutual information).
\label{cor:ssa}
\end{cor}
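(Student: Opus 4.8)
The plan is to reduce both inequalities --- on a single, well-chosen Cauchy slice --- to the purely Riemannian versions of SSA \cite{Headrick:2007km} and of monogamy \cite{Hayden:2011ag}; as already noted in the introduction, those arguments use nothing about the geometry of the slice beyond the fact that the relevant surfaces are minimal subsurfaces of it. For SSA with consecutive regions $A,B,C\subset\gamma$, I would invoke Claim \ref{claim:ewn} for the single nested pair $(B,ABC)$, so that $\Mm(B)$ and $\Mm(ABC)$ are both minimal on one slice $\Sigma\in\mathcal{C}_\gamma$. On $\Sigma$, the standard regrouping-and-smoothing of \cite{Headrick:2007km} applied to the minimal surfaces $\min(AB,\Sigma)$ and $\min(BC,\Sigma)$ produces competitors for $B$ and for $ABC$ within $\Sigma$, giving $\A(\min(AB,\Sigma))+\A(\min(BC,\Sigma))\ge\A(\min(B,\Sigma))+\A(\min(ABC,\Sigma))$. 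Since $\Sigma$ realizes $\Mm(B)$ and $\Mm(ABC)$, one has $\min(B,\Sigma)=\Mm(B)$ and $\min(ABC,\Sigma)=\Mm(ABC)$, so the right-hand side equals $S(B)+S(ABC)$; and by the maximization step of the maximin definition $\A(\min(AB,\Sigma))\le\A(\Mm(AB))=S(AB)$ and likewise for $BC$. Chaining the inequalities gives $S(B)+S(ABC)\le S(AB)+S(BC)$, which is SSA.

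For monogamy I would run the analogous reduction with the four regions $A,B,C,ABC$. Performing a joint restricted-maximin over $Z=c_A\A(X_A)+c_B\A(X_B)+c_C\A(X_C)+c_{ABC}\A(X_{ABC})$ with all coefficients positive produces a slice $\Sigma$ on which each of $\Mm(A),\Mm(B),\Mm(C),\Mm(ABC)$ is minimal, once one knows the joint restricted-maximin surfaces coincide with the individual ones for this family. That identification follows the reasoning of Corollary \ref{cor2}: the only pairs among $\{A,B,C,ABC\}$ not of the nested form $(X,ABC)$ are $(A,B),(A,C),(B,C)$, whose surfaces are spacelike separated by Corollary \ref{cor1} and are handled exactly as the non-coinciding case there. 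Given $\Sigma$, the cut-and-paste of \cite{Hayden:2011ag} applied to $\min(AB,\Sigma),\min(BC,\Sigma),\min(AC,\Sigma)$ yields competitors for $A,B,C,ABC$ on $\Sigma$; using $\A(\min(X,\Sigma))=S(X)$ for $X\in\{A,B,C,ABC\}$ and $\A(\min(XY,\Sigma))\le S(XY)$ for the pairs gives $S(A)+S(B)+S(C)+S(ABC)\le S(AB)+S(BC)+S(AC)$, i.e.\ negativity of the tripartite mutual information.

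The genuinely hard part is already behind us: it was establishing, in Claims \ref{claim:spacelike}--\ref{claim:ewn}, that restricted maximin surfaces of nested regions are simultaneously minimal on a common Cauchy slice, together with the null-expansion control (Claim \ref{claim:ext}, Corollary \ref{cor2}) and the representative argument underpinning it. Given those inputs, this corollary is essentially bookkeeping, and the only new check is the (routine) extension of the joint/individual identification to the four-region family used for monogamy. One minor subtlety worth flagging is that the common slice $\Sigma$ supplied by the maximin construction need not be everywhere spacelike; but since the arguments of \cite{Headrick:2007km} and \cite{Hayden:2011ag} are insensitive to the geometry of the slice on which the minimal surfaces sit, this causes no difficulty.
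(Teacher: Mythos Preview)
Your proposal is correct and follows essentially the same route as the paper: for SSA you invoke Claim \ref{claim:ewn} on the nested pair $(B,ABC)$ to place $\Mm(B)$ and $\Mm(ABC)$ on a common slice and then run the Headrick--Takayanagi surgery together with the maximization step, exactly as the paper does; for monogamy you use a four-region joint restricted-maximin to obtain a single slice on which $\Mm(A),\Mm(B),\Mm(C),\Mm(ABC)$ are all minimal and then apply the Hayden et al.\ cut-and-paste. The paper simply asserts the existence of such a common slice for the four surfaces, whereas you spell out that this requires extending the joint/individual identification of Corollary \ref{cor2} to the four-region family and note that the only non-nested pairs are disjoint and are covered by Corollary \ref{cor1}; this is a valid (and arguably more explicit) justification of the same step.
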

\begin{proof}
Let $A,B$ and $C$ be non-overlapping regions in $\gamma$. Since $B\subset A\cup B\cup C$, we know that there exists a single surface $\Sigma\in\mathcal{C}_{\gamma}$
with $\Mm(B)$ and $\Mm(A\cup B\cup C)$ both minimal on $\Sigma$. As shown in figure \ref{fig:ssa}, the same surgery argument as in \cite{Headrick:2007km, Wall:2012uf} requires
\[
\A(\Mm(B))+\A(\Mm(A\cup B\cup C))\leq\A(\min(A\cup B,\Sigma))+\A(\min(B\cup C,\Sigma))
\]. But $\A(\min(A\cup B,\Sigma))<\A(\Mm(A\cup B))$
and $\A(\min(B\cup C,\Sigma))<\A(\Mm(B\cup C))$ by the definition of
$\Mm$.

\begin{figure}[h]
\centering
\includegraphics[width = 0.5\linewidth]{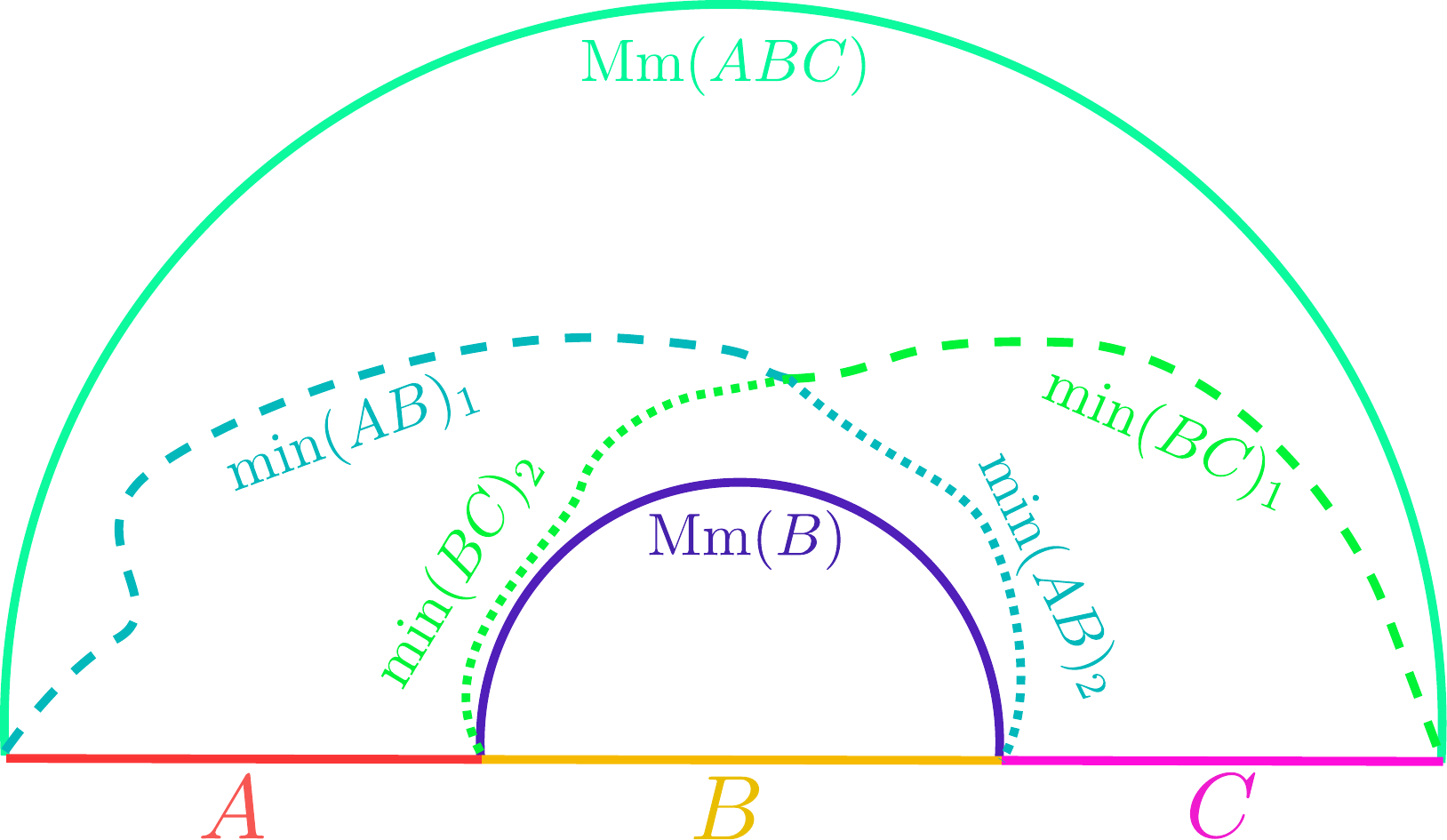}
\caption{The argument for strong subadditivity, with boundary subregions $A,B,$ and $C$. By Claim \ref{claim:ewn}, the surfaces $\Mm(ABC)$ and $\Mm(B)$ are minimal on a common Cauchy slice $\Sigma$. Note that we can also form a curve anchored to $\partial(ABC)$ by following dashed lines along min$(AB)_1$ and min$(BC)_1$.   A curve anchored to $\partial B$ can likewise be found by following the dotted lines along min$(AB)_2$ and min$(BC)_2$. However, minimality on $\Sigma$ means that $\A[\Mm(ABC)]\leq \A[\text{min}(AB)_1 \cup \text{min}(AB)_1]$ and $\A[\Mm(B)]\leq \A[\text{min}(AB)_2 \cup \text{min}(AB)_2]$.  Furthermore, $\A[\text{min}(AB)] \leq \A[\Mm(AB)]$ and  $\A[\text{min}(BC)] \leq \A[\Mm(BC)]$ by the maximization step of restricted maximin. Thus  $\A \Mm(ABC)  +\A \Mm(B) \leq \A(\text{min}(AB)_1) + \A(\text{min}(AB)_2) +\A(\text{min}(BC)_1)+\A(\text{min}(BC)_2) \leq \A(\Mm(AB)) + \A(\Mm(BC))$ andstrong subadditivity holds.  A similar argument holds in cases where regions $A,B,C$ do not meet at their boundaries.}
\label{fig:ssa}
\end{figure}

Monogamy of mutual information is handled similarly. We take $\Sigma$ such that $\Mm(A)$, $\Mm(B)$, $\Mm(C)$ and $\Mm(A\cup B\cup C)$ are all simultaneously minimized on $\Sigma$. then, the surgery argument implies that
\begin{align*}
 & \A(\Mm(A))+\A(\Mm(B))+\A(\Mm(C))+\A(\Mm(A\cup B\cup C))\\
\leq & \A(\min(A\cup B,\Sigma))+\A(\min(A\cup C,\Sigma))+\A(\min(B\cup C,\Sigma))
\end{align*}
from which the inequality follows. $\square$
\end{proof}

\section{Discussion}
\label{sec:disc}

Our work above studied possible definitions of holographic entropy in bulk spacetimes with a radial cutoff.  In particular, we focused on globally hyperbolic bulk regions that one may think of as domains of dependence for some achronal surface $\Sigma$ with boundary $\partial \Sigma = \gamma$.  We think of the surface $\gamma$ as defining the cutoff, even though it is codimension 2 in the full spacetime.  It may also be useful to think of $\gamma$ as an achronal slice of some codimension 1 timelike surface, whether the latter is a strict cutoff or just a partition of the bulk into two parts.  
While the areas of HRT surfaces anchored to $\gamma$ can generally violate  strong subadditivity (SSA), or even just subadditivity,  we argued that the areas of similarly anchored restricted maximin surfaces will satisfy both SSA and monogamy of mutual information.  Any other maximin-provable holographic entropy inequality \cite{Rota:2017ubr} will of course follow as well.  

The above qualifier `restricted' means that the maximization is only with respect to achronal surfaces $\Sigma$ with boundary $\partial \Sigma=\gamma$.  This restriction is important, as the example of section \ref{sec:counter-flat} shows that SSA can fail for the areas of unrestricted maximin surfaces in precisely the same manner as for HRT areas.  Appendix \ref{sec:appendix} establishes further properties of our restricted maximin surfaces not required for the main argument, but which may be useful in the future.  In particular,  up to sets of measure zero it shows that our restricted maximin surfaces either coincide with $\gamma$ or are spacelike separated from $\gamma$; i.e., null separations from $\gamma$ are rare.

As in the previous works \cite{Wall:2012uf,Marolf:2019bgj} we have neglected certain technical issues associated with proving that our restricted maximin surfaces can be chosen to be stable, and also with the fact that general Cauchy surfaces are not piece-wise smooth.  We hope that such points will be addressed in the near future, though we note that almost 30 years past before similar issues were fully resolved by \cite{Chrusciel:2000cu} in the context of the Hawking area theorem \cite{Hawking:1971tu}.

An interesting feature of our construction is that restricted maximin surfaces often have regions that coincide exactly with portions of $\gamma$.  This would not occur for an HRT surface except on portions of $\gamma$ that happen to be extremal.  However, it is clear that the same phenomenon {\it does} often occur for RT surfaces defined by minimizing areas over subsurfaces of some time-symmetric slice $\Sigma$. For example, in flat space it occurs whenever $\Sigma$ fails to be convex; see figure \ref{fig:concave}.  Our restricted maximin procedure is thus more similar to the RT prescription in the presence of a cutoff than is naive application of HRT.

\begin{figure}[h]
\centering
\includegraphics[width = 0.3\linewidth]{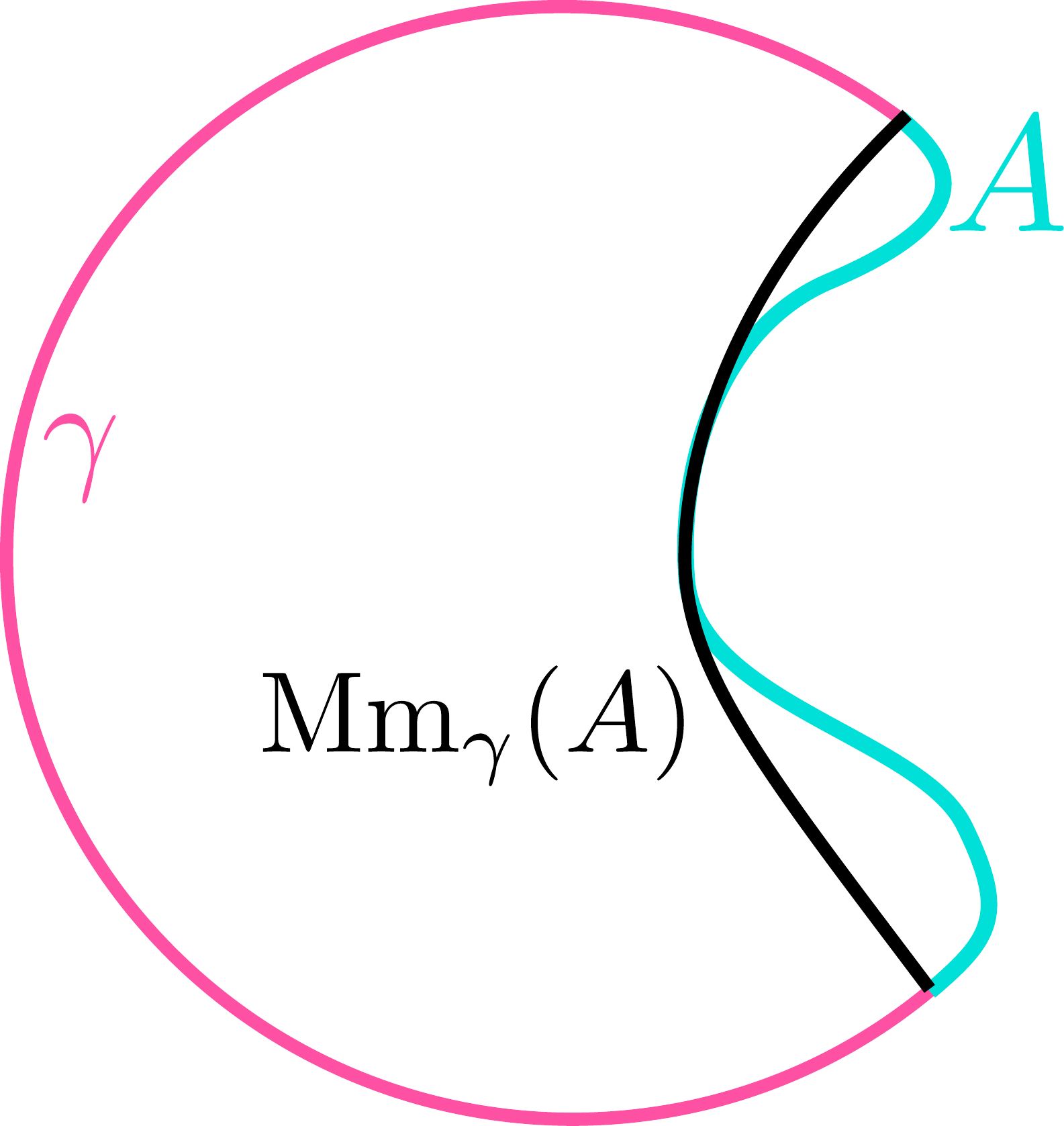}
\caption{A time-symmetric example where $\gamma$ fails to be convex, and consequently $\Mm(A)$ (and the RT surface associated with $A$) would coincide with $\gamma$ along some portion.}
\label{fig:concave}
\end{figure}

In retrospect, the reason that HRT surfaces behave so differently is simple to state.  It is well known that extrema of a function can occur either at stationary points, or at the edge of the allowed domain. In the absence of a cutoff and with natural boundary conditions, the latter can be ruled out.  But finite cutoffs provide an edge where extrema can arise, so an HRT-like prescription that involves solving differential equations for stationary points of the area functional can give very different results from applying a minimization procedure, or indeed from restricted maximin.

One would like to derive our prescription from an appropriate path integral in parallel with the cutoff-free arguments of \cite{Lewkowycz:2013nqa} and \cite{Dong:2016hjy}.  One class of obstacles to doing so are uncertainties regarding physics of the cutoff surface. If we momentarily ignore these issues we may proceed as in \cite{Donnelly:2018bef}; see also \cite{Donnelly:2019pie,Murdia:2019fax}. While  those references focused on convex spacetimes, in more general cases one would expect end-point effects similar to those described above to naturally arise in studying the path integral. For example, if we study replica geometries by preserving the original boundary conditions but introducing codimension 2 cosmic branes which source conical singularities, the path integral over cosmic brane locations will again often receive important contributions from configurations in which the brane partially coincides with the boundary of the spacetime.  This is particularly clear in the Euclidean context where the dominant contribution comes from the configuration with minimum action.  As above, this minimum may well arise on the boundary of the space of allowed cosmic brane solutions, and in particular where the cosmic brane itself partially coincides with the boundary of spacetime\footnote{We thank Xi Dong and Henry Maxfield for discussions of this point.}.

The fact that our restricted maximin procedure succeeds may be taken as evidence supporting the utility of cutoff holography.  However, it remains to better understand what further structures might be implied.  For example, the success of our procedure may also suggest that when a slice $\gamma$ of the cutoff surface is achronal with respect to the bulk causal structre, distinct points on $\gamma$ correspond to independent degrees of freedom for any bulk dual.  This is also natural from the bulk point of view.  But this viewpoint would in turn imply that the entropy of a boundary region $A$ should depend only on the appropriate notion of a domain of dependence of $A$, perhaps as determined by the causal structure of the bulk rather than the causal structure of the boundary.  And this statement is precisely what fails in the example of section \ref{sec:counter-flat}, where the area of the restricted maximin surface for $AB$ depends in detail on the choice of codimension 2 surface $\gamma.$  It would thus be interesting to investigate such issues further, perhaps by studying  $T\bar T$ deformations or their analogues in higher dimensions.

\section{Acknowledgments}
It is a pleasure to thank Xi Dong, Henry Maxifield, and especially Veronika Hubeny for useful discussions.  This material is based upon work supported by the Air Force Office of Scientific Research under award number FA9550-19-1-0360.  It  was also supported in part by funds from the University of California. In addition,
BG-W was supported in part by a University of California, Santa Barbara central fellowship.

\appendix 

\section{$\Mm(A)$ is mostly contained in the interior of $D(\gamma)$ and in $\gamma$ itself}
\label{sec:appendix}
This appendix contains some addition results concerning the possible intersections of restricted maximin surfaces with $\partial D(\gamma)$ that can be easily proven using the techniques of section \ref{sec:results}, but which are not required for our main results.  In particular, we saw in the first example from section \ref{sec:counter} that $\Mm(A)$ can intersect $\gamma$.  The proofs of the claims below establish that, up to possible sets of measure zero, $\Mm(A)$ is contained in the union of the interior of $D(\gamma)$ and $\gamma$ itself.  We focus below on excluding open sets of $\Mm(A)$ from $L^+$, but analogous arguments clearly also hold for $L^-$.

\begin{claim}
\label{theta_positive}
Suppose that $p\in L^{+}$ has $\theta_{+}\geq0$ (where $\theta_+$ is the expansion of the null generators of $L^+$). For any boundary region $A$, there does not exist a subset $V\subset \Mm(A), $ open in $\Mm(A)$, such that $p\in V $ and $V\subset L^+$.
\end{claim}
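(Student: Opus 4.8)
The plan is a proof by contradiction exploiting the degeneracy of the induced metric on the null hypersurface $L^+$ along its generators: if an open piece $V\subset\Mm(A)$ lay inside $L^+$, one could slide it down its generators toward $\gamma$ at zero cost in the tangential directions, and at a strict gain wherever the generating expansion $\theta_+$ is positive, contradicting minimality of $\Mm(A)$ on a Cauchy slice $\Sigma$ on which it is minimal. So suppose such a $V$ exists and fix such a $\Sigma$.

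First I would record the structure of $V$. By Claim \ref{claim:spacelike} (extended to $\partial D(\gamma)$ exactly as in its proof), $\Mm(A)$ contains no null curves, and since it lies on the achronal $\Sigma$ it contains no timelike curves either, so $V$ is spacelike. Because $T_xV\subset T_xL^+=k(x)^\perp$ contains no null vectors, we have $k(x)\notin T_xV$ and hence $k(x)\in N_xV$; thus $V$ is everywhere transverse to the generators of $L^+$, and the $k$-directed null congruence orthogonal to $\Mm(A)$ coincides near $V$ with the generating congruence of $L^+$. By Claim \ref{claim:ext} the $k$-expansion of $\Mm(A)$ vanishes along $V$, so (as the two differ only by a positive affine rescaling of the null tangent) $\theta_+\equiv 0$ on $V$; the hypothesis $\theta_+(p)\ge 0$ mainly serves to keep $p$ away from caustics of $L^+$. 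Next, since $\Sigma$ is achronal and contains $\gamma$ together with every point of $V$, it must contain each generator segment of $L^+$ from $\gamma$ up to $V$ (the same argument as in the proof of Claim \ref{claim:spacelike}), hence the whole null ``slab'' $S\subset L^+$ swept out by those segments, with $V$ as one face. On $S$ use coordinates $(\lambda,x)$ with $\lambda$ the affine parameter, $\lambda=0$ on $\gamma$; the null degeneracy gives $\A(W)=\int\delta A(\Lambda(x),x)\,d^{d-1}x$ for any section $W=\{\lambda=\Lambda(x)\}$ of $S$, where $\partial_\lambda\log\delta A=\theta_+$. From the NCC, Raychaudhuri, the generic condition, and $\theta_+=0$ on $V$, one gets $\theta_+>0$ \emph{strictly} on $S$ below $V$: if $\theta_+$ vanished at some smaller $\lambda$ it would vanish on an interval, forcing the shear and null curvature to vanish there and violating genericity.

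Finally I would produce the competitor: deform $\Mm(A)$ within $\Sigma$ by replacing $V$ over a small neighborhood $U$ of $p$ with a section of $S$ pushed toward $\gamma$, smoothly matched back to $V$ along $\partial U$, and leaving $\Mm(A)$ and its anchor $\partial\Mm(A)=\partial A\subset\gamma$ unchanged elsewhere. This surface lies in $\Sigma$ (it lies in $S$ over $U$), is achronal, and is still homologous to $A$ within $\Sigma$, since only a portion of the boundary of the homology region has been moved within $\Sigma$. By the area formula and $\theta_+>0$ below $V$, lowering $\Lambda$ decreases $\delta A$ pointwise, so the new surface has \emph{strictly} smaller area than $\Mm(A)$ — over $U$ and over the matching region alike — contradicting minimality of $\Mm(A)$ on $\Sigma$. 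The $L^-$ case follows by time reversal. The step I expect to be the crux is the claim that $\Sigma$ must contain the entire null slab $S$ below $V$, so that the downward deformation is an admissible variation of the minimal surface on $\Sigma$; once that (and the smooth matching near $\partial U$) is handled, the strictness of the area decrease is immediate from the null geometry and the generic condition.
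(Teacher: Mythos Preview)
Your proof is correct and follows essentially the same strategy as the paper: achronality of $\Sigma$ forces it to contain the generator segments of $L^+$ below $V$, the generic condition plus Raychaudhuri gives $\theta_+>0$ strictly there, and sliding $V$ down the generators within $\Sigma$ strictly decreases area, contradicting minimality. Your version is more explicit about transversality and the area formula, and you additionally invoke Claim~\ref{claim:ext} to pin down $\theta_+=0$ on all of $V$ (the paper uses only the hypothesis $\theta_+(p)\ge 0$), but the core argument is the same.
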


\begin{proof}
Suppose that such regions $A$ and $V$ exist.  There must exist a surface $\Sigma\in\mathcal{C}_{\gamma}$ such that $\Mm(A)=\min(A,\Sigma)$. Because $\Sigma$ is an achronal surface containing $V \subset L^+$ and $\gamma$, $\Sigma$ must contain the part of the null generator of $L^+$ containing $p$ that lies to the past of $p$.  Because $\theta_{+}\geq 0$ at $p$, the null generic condition requires $\theta_{+}>0$ for any point in $L^{+}$ to the past of $p$.  But since Claim \ref{claim:spacelike} forbids $\Mm(A)$ from lying along this generator of $L^+$, we see that deforming $\Mm(A)$ in the $-k$ direction at the point $p$ keeps the surface in $\Sigma$ and that this deformation reduces its area (since $\theta_+>0$).   This contradicts the fact that $\Mm(A)=\min(A,\Sigma)$, so $A,V$ cannot exist.
\end{proof}

\begin{claim}
\label{claim:negative_theta}
Suppose that $p\in L^{+}$ has $\theta_{+}<0$. Then $p$ cannot be in $\Mm(A)$ for any $A$.
\end{claim}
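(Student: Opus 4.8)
The plan is to argue by contradiction. I would suppose $p\in\Mm(A)$ for some boundary region $A$ and derive $\theta_+(p)\geq 0$, contradicting the hypothesis. Since $p\in L^+$ we have $p\notin\gamma$, so Claim~\ref{claim:ext} applies at $p$: it says $\Mm(A)$ is extremal at $p$ along directions tangent to the Cauchy slice $\Sigma$ on which it is minimal, and --- the input I really need --- that the expansion $\theta_k$ of the future-directed ingoing null congruence $N$ orthogonal to $\Mm(A)$ at $p$ (the one that does not immediately leave $D(\gamma)$) vanishes, $\theta_N(p)=0$. I would also record why the relevant null normal at $p$ is literally $k$, the tangent to the generator of $L^+$ through $p$: since $\Mm(A)\subset\overline{D(\gamma)}$ passes through the boundary point $p$, its tangent space at $p$ --- or, if $\Mm(A)$ is not smooth there, each of its tangent cones --- must lie in the null hyperplane $k^{\perp}$ tangent to $L^+$ at $p$; hence $k$ is orthogonal to $\Mm(A)$ at $p$, and it is the ingoing normal because following $k$ from $p$ stays in $L^+\subset\overline{D(\gamma)}$.

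The second and main step is to compare $N$ with the null congruence $L^+$ itself near $p$. These two congruences are tangent at $p$, with common tangent vector $k$ there. Moreover $N$ is generated by the ingoing null geodesics launched from $\Mm(A)\subset\overline{D(\gamma)}$, so in a neighborhood of $p$ it remains in $\overline{D(\gamma)}$ and in particular never enters the chronological future of the future boundary $L^+$; equivalently, $L^+$ is nowhere to the chronological past of $N$ near $p$. Corollary~\ref{cor:bend} --- the piecewise-smooth form of Lemma~\ref{lemma:null}, which also covers the case in which $p$ is a caustic of $L^+$ --- then gives $\theta_+(p)=\theta_{L^+}(p)\geq\theta_N(p)=0$, the desired contradiction with $\theta_+(p)<0$. (The complementary case $\theta_+(p)\geq 0$ yields no contradiction here, consistent with Claim~\ref{theta_positive}.)

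The step I expect to be the main obstacle is the geometric bookkeeping near $p$: verifying that $N$ is tangent to the $L^+$ generator at $p$ and that $N$ genuinely stays in $\overline{D(\gamma)}$ near $p$, so that the hypothesis of Corollary~\ref{cor:bend} is legitimately satisfied. This is most delicate when $\Mm(A)$ fails to be smooth at $p$ --- for instance if $\Mm(A)$ partly coincides with $L^+$ and partly lies in the interior of $D(\gamma)$, meeting at a kink at $p$ --- in which case I would argue with tangent cones and, as in the proofs of Claims~\ref{claim:spacelike} and~\ref{claim:ext}, with families of smooth approximating surfaces and congruences. In the degenerate sub-case where an open neighborhood of $p$ in $\Mm(A)$ lies in $L^+$, the comparison collapses to the observation that $N$ and $L^+$ coincide near $p$, so $\theta_+(p)=\theta_N(p)=0$ directly. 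Finally, I would read the hypothesis of Claim~\ref{claim:ext} as ``$p\notin\gamma$'' rather than the ``$p\notin\partial D(\gamma)$'' as literally written, since the latter is inconsistent with the following clause ``if $p\in L^\pm$''.
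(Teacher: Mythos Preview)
Your proposal is correct and follows essentially the same route as the paper: invoke Claim~\ref{claim:ext} to obtain $\theta_k=0$ for the future-ingoing null congruence $N$ orthogonal to $\Mm(A)$ at $p$, then compare $N$ with $L^+$ via Corollary~\ref{cor:bend} using that $N\subset\overline{D(\gamma)}$ lies nowhere to the future of $L^+$, yielding $\theta_+(p)\ge\theta_N(p)=0$ and hence a contradiction. The paper's proof is more terse---it simply asserts the comparison and reverses the order (first deducing $\theta_{k,N}<0$ from $\theta_+<0$, then contradicting Claim~\ref{claim:ext})---while you spell out more carefully why $k$ is the relevant null normal and why $N$ stays in $\overline{D(\gamma)}$; your observation about the literal wording of Claim~\ref{claim:ext} is also apt.
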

\begin{proof}

Let $\Sigma$ be a Cauchy surface on which $\Mm(A)$ is minimal, and consider the future-ingoing directed null congruence $N$ orthogonal to $\Mm(A)$.  Since it is nowhere to the future of $L^+$ and $\theta_{+}<0$, we must  have  $\theta_{k,N}<0$ by Corollary zero. But any $p\in \Mm(A)$ must have $\theta_{k,N}=0$ by Claim \ref{claim:ext}. So $p$ cannot lie in $L^+$.

\end{proof}


In fact, the argument for Claim \ref{claim:negative_theta} also applies at caustics or nonlocal intersections (i.e., at all points where $\theta_+$ is ill defined).  In such cases, deforming $\Sigma$ to the past again causes the area of any surface $\min(A,\Sigma_\epsilon)$ to exceed $\Mm(A)$, contradicting the maximization step of the maximin procedure.  We formalize this result in the following statement

\begin{claim}
\label{claim:caustic}
Suppose $\theta_+$ is ill defined or diverges at $p \in L^+$.  Thus $p \in L^{+}$ lies on a caustic or nonlocal self intersection of $L^+$ by Theorem 1 of \cite{Akers:2017nrr}.  Then for any boundary region $A$, there does not exist a subset $V\subset \Mm(A) $ open in $\Mm(A)$ such that $p\in V $ and $V\subset L^+$.
\end{claim}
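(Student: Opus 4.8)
The plan is to adapt the argument in the proof of Claim~\ref{claim:negative_theta}, but since $\theta_{+}$ is no longer finite at $p$ it cannot be fed directly into Corollary~\ref{cor:bend}. Instead I would show that the existence of an open $V\subset\Mm(A)$ through $p$ with $V\subset L^{+}$ forces the Cauchy surface on which $\Mm(A)$ is minimal to be non-smooth at $p$ in exactly the way excluded at the end of the proof of Claim~\ref{claim:ext}, and then derive a contradiction with maximality.

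First I would set up for contradiction: suppose such $A,V$ exist, fix $\Sigma\in\mathcal{C}_{\gamma}$ with $\Mm(A)=\min(A,\Sigma)$, and shrink $V$ so that it is open in $\Mm(A)$, contains $p$, is disjoint from $\gamma$, and has $V\subset L^{+}$. Since $\Sigma$ is achronal and contains both $V$ and $\gamma$, for each $q\in V$ it must contain the entire generator of $L^{+}$ joining $\gamma$ to $q$; these generators sweep out a codimension~1 null sheet $S\subset L^{+}\cap\Sigma$ whose future edge is $V$. In particular $\Mm(A)\subset L^{+}$ near $p$, so at each $q\in V$ one of the two orthogonal future null normals of $\Mm(A)$ is the generator direction of $L^{+}$ (the null normal of a null hypersurface is orthogonal to everything tangent to it), and the corresponding orthogonal null congruence is precisely the family of generators making up $S$.

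The crux is the behaviour at $p$. If $p$ is a caustic of $L^{+}$, the generators forming $S$ focus there, so $L^{+}$ — and hence $\Sigma\supset S$ — fails to be $C^{1}$ at $p$ and acquires at least two distinct tangent directions normal to $\Mm(A)$. If instead $p$ is a nonlocal self-intersection, a second generator $\mu'$ of $L^{+}$ also runs from $\gamma$ to $p$; both its endpoints lie on the achronal surface $\Sigma$, so $\mu'\subset\Sigma$, and $\mu'$ meets the generator through $p$ transversally, again producing two distinct tangent directions of $\Sigma$ at $p$ normal to $\Mm(A)$. Either way, $\Sigma$ is non-smooth at $p$ in precisely the manner shown, in the last paragraph of the proof of Claim~\ref{claim:ext}, to be incompatible with maximality: using minimality of $\Mm(A)$ on $\Sigma$ along the two tangent directions together with the stability assumption, one deforms $\Sigma$ slightly to the past near $p$ to obtain a Cauchy surface on which the minimal surface homologous to $A$ has strictly larger area than $\Mm(A)$, contradicting the maximization step of restricted maximin. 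This rules out $V$; the $L^{-}$ case follows by time reversal. Note that it is the openness of $V$ in $\Mm(A)$, not merely $p\in\Mm(A)$, that forces the whole sheet $S$ into $\Sigma$ and hence the corner at $p$, so an isolated point of $\Mm(A)$ on a caustic is not excluded — consistent with the ``up to sets of measure zero'' statements of this appendix.

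The step I expect to be the main obstacle is making the non-smoothness of $\Sigma$ at $p$ precise and uniform across the relevant cases — fold caustics, higher-order focal points, and transverse nonlocal self-intersections — and verifying in each that the two tangent directions of $\Sigma$ at $p$ are genuinely distinct and normal to $\Mm(A)$, so that the Claim~\ref{claim:ext}-type past deformation is actually available. As a cross-check I would also note a cleaner but less self-contained route: by the previous paragraph the expansion $\theta_{k}$ of the orthogonal null congruence built from the generators of $S$ equals $\theta_{+}$ along $V$, while Claim~\ref{claim:ext} gives $\theta_{k}=0$ at points of $\Mm(A)\cap L^{+}$ away from $\gamma$; this would make $\theta_{+}$ finite and vanishing along $V$, contradicting both the hypothesis that $\theta_{+}$ is ill defined at $p$ and, along generator segments, the generic condition. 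The catch is that Claim~\ref{claim:ext}'s conclusion $\theta_{k}=0$ uses smoothness of $\Sigma$ at the point in question, which is exactly what fails at a caustic — so the deformation-based argument is the more robust one and the one that must carry the proof.
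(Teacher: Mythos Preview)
Your argument lands on the same contradiction as the paper --- a past deformation of $\Sigma$ near $p$ strictly increases the minimal area, violating the maximization step --- but you reach it by a longer and more delicate path. The paper's proof is literally ``Same as above,'' with the sentence preceding the claim doing the work: at a caustic or nonlocal self-intersection the argument of Claim~\ref{claim:negative_theta} still applies because Corollary~\ref{cor:bend} was written precisely to absorb the ill-defined case (via smooth approximating congruences, or equivalently treating $\theta_{+}$ as carrying a negative delta-function at the singularity). So one simply runs the Claim~\ref{claim:negative_theta} comparison with $\theta_{+}$ effectively $-\infty$ and concludes directly that pushing $\Sigma$ to the past increases $\min(A,\Sigma_\epsilon)$. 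Your ``cross-check'' paragraph is in fact much closer to what the paper actually does than your main argument; the smoothness worry you raise there is exactly the issue Corollary~\ref{cor:bend} was set up to finesse.

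Your main route --- establishing that $\Sigma$ acquires a genuine corner at $p$ and then invoking the kink-contradicts-maximality step from the proof of Claim~\ref{claim:ext} --- is reasonable for caustics, where the focusing of generators forces a fold in $S\subset\Sigma$. But for nonlocal self-intersections there is a gap you should watch: the second generator $\mu'$ has tangent $-k_2$ at $p$, and while your achronality argument correctly forces $\mu'\subset\Sigma$, there is no reason $k_2$ should be \emph{orthogonal} to $\Mm(A)$ (the null normals to $V$ are $k_1$ and $l_1$, and a transverse generator need not lie in their span). So $\mu'$ does not obviously supply the second normal direction the Claim~\ref{claim:ext} kink argument needs. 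This is fixable, but it illustrates why the paper's Corollary~\ref{cor:bend} route --- which treats caustics and nonlocal intersections uniformly without any case analysis of the corner geometry --- is the cleaner choice.
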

\begin{proof}
Same as above.
\end{proof}

\bibliography{ssa}
\bibliographystyle{utcaps}

\end{document}